\newtheorem{theorem}{Theorem}[section]
\newtheorem{lemma}[theorem]{Lemma}
\begin{document}
\title{Isomorphism of graph classes related to the circular-ones property}
\author{Andrew R. Curtis\thanks{Cheriton School of Computer Science, University of Waterloo}
\and Min Chih Lin\thanks{CONICET, Instituto de C\'alculo and Departamento de Computaci\'on, FCEyN, Universidad de Buenos Aires}
\and Ross M. McConnell\thanks{Computer Science Department, Colorado State University}
\and Yahav Nussbaum\thanks{The Blavatnik School of Computer Science, Tel Aviv University}
\and Francisco J. Soulignac\thanks{CONICET, Departamento de Computaci\'on, FCEyN, Universidad de Buenos Aires}
\and Jeremy P. Spinrad\thanks{EECS Department, Vanderbilt University}
\and Jayme L. Szwarcfiter\thanks{Universidade Federal do Rio de Janeiro, Instituto de Matem\'atica}}

\date{}
\maketitle

\begin{abstract}
We give a linear-time algorithm that checks for isomorphism between two $0-1$ matrices that obey the circular-ones property.
This algorithm leads to linear-time isomorphism algorithms for related graph classes, including Helly circular-arc graphs,
$\Gamma$ circular-arc graphs, proper circular-arc graphs and convex-round graphs.
\end{abstract}

\section{Introduction}

Two graphs $G$ and $G'$ are \emph{isomorphic} if there is a bijection 
$\pi$ from the vertex set of $G$ to the vertex 
set of $G'$ that satisfies the following condition:
for all $u,v \in V(G)$, $u$ is adjacent to $v$ in $G$ if and only if 
$\pi(u)$ is adjacent to $\pi(v)$ in $G'$. The \emph{graph isomorphism problem} 
consists of determining whether two input graphs are isomorphic. The graph 
isomorphism problem is in $NP$, but it is neither known to be NP-complete
nor known to be in $P$.  However, the restriction of the problem to members
of graph classes with certain special topological properties is 
known to result in polynomial or even linear-time
algorithms.  The subject of this paper is the development of such algorithms
for a variety of such classes.

We show linear-time isomorphism algorithms for Helly circular-arc graphs,
$\Gamma$ circular-arc graphs, proper circular-arc graphs and convex-round graphs.
The common building block for all of these algorithms, is a linear-time
isomorphism algorithm for binary matrices that obey the circular-ones property,
which we show. 

In order to explain our results, we first establish some basic terminology.
A matrix is a {\em binary matrix} if all of its entries are 0 or 1.  
The \emph{adjacency matrix} of a simple graph $G$ is a binary square matrix $M$ 
that has $1$ in row $i$, column $j$, if vertex $i$ is adjacent to vertex $j$,
and 0 otherwise.  The 
\emph{augmented adjacency matrix} of $G$ is the adjacent matrix of the 
graph with $1$'s on the main diagonal.

A {\em clique matrix} of a graph is a binary matrix that has one row for
each vertex and one column for each maximal clique, and a 1 in row $i$, 
column $j$ if vertex $i$ is a member of maximal clique $j$.

A {\em consecutive-ones} matrix is a binary matrix whose columns 
can be ordered such that, in every row, the 1's are consecutive.
Such an ordering of the columns is a {\em consecutive-ones ordering}.
A {\em circular-ones} matrix is a binary matrix whose columns can be ordered
such that, in every row, either the 0's are consecutive or 1's are consecutive;
equivalently, the 1's are consecutive modulo the number of columns, and
the block of 1's is allowed to ``wrap around'' from the rightmost to the 
leftmost column.  Such an ordering of the columns is a {\em circular-ones
ordering}.  It is easily seen that the class of consecutive-ones matrices
is a proper subclass of the class of circular-ones matrices~\cite{Gol80}.

If a circular-ones ordering of a matrix with $n$ rows is known, then
it can be represented in $O(n)$ space by recording the number of
columns and listing, for each row, the
columns of the first and last 1 in the row.  
The column of the first 1 is the column
where a 1 follows a zero and the column of the last 1 is the one where
a 1 is followed by a 0.  (The column of the last 1 precedes
that of the first if the row wraps around the end of the matrix.)
A row that has only 0's or only 1's can be represented with
a suitable code, such as a single 0 or a single 1.
Let us call this a {\em succinct representation} of a circular-ones matrix.
There may be many succinct representations, since there may be
many circular-ones orderings of the matrix.

The {\em intersection graph} of a family of sets has one vertex for each
set in the family and an edge between two vertices if the corresponding
sets intersect.

A \emph{circular-arc graph} is the intersection graph of arcs on a 
circle. If we restrict circular-arc graphs to intersection graph of arcs on 
a circle such that no arc contains another, we get \emph{proper 
circular-arc graphs}~\cite{Gol80}. 

Another subclass of circular-arc graphs is the \emph{Helly circular-arc 
graphs}, sometimes called \emph{$\theta$ circular-arc graphs}.  
A circular-arc model has the {\em Helly property} if every family of
pairwise intersecting arcs has a common intersection point.
Such a model is called
a {\em Helly circular-arc model}.  A circular-arc 
graph is a Helly circular-arc graph if there exists a Helly circular-arc
model of it.
In~\cite{Gavril74}, it is shown that a graph is a Helly circular-arc graph if
and only if the clique matrix satisfies the circular-ones 
property.

The graphs whose augmented adjacency matrices satisfy 
the circular-ones property are \emph{$\Gamma$ circular-arc 
graphs}, also called \emph{concave-round graphs}. This graph class is also 
a subclass of circular-arc graphs~\cite{Tuck71}. 

The {\em complement} $\overline{G}$ of an undirected graph $G$
has the same vertex set, and an edge between two vertices if and only
if there is no edge between them in $G$.

The complement of a $\Gamma$ circular-arc graph has the circular-ones 
property for its adjacency matrix.   This class of graphs is called 
\emph{convex-round graphs}~\cite{BHY00}.

Wu~\cite{Wu83} presented the first polynomial algorithm for 
circular-arc graph
isomorphism, but later Eschen~\cite{EschenThesis} claimed to find a flaw in it. 
Hsu~\cite{Hsu95} presented an $O(nm)$ isomorphism algorithm for 
circular-arc graphs where $n$ denotes the number 
of vertices and $m$ denotes the number of edges in a graph.
In Section~\ref{sect:hsu} we give a counterexample to the
correctness of this algorithm.  We also describe
there a suggestion given by Hsu for a possible fix for the algorithm.
Therefore, there are currently no known efficient isomorphism algorithms 
for circular-arc graphs.
Some subclasses of circular-arc graphs do have efficient 
isomorphism algorithms. Interval graphs~\cite{LB79}, co-bipartite circular 
arc graphs~\cite{EschenThesis}, and proper circular-arc graphs~\cite{LSS08} all 
have linear-time isomorphism algorithms, while $\Gamma$ circular-arc graphs 
have an $O(n^2)$ isomorphism algorithm \cite{Che00}.

Two binary matrices $M_1$ and $M_2$ are {\em isomorphic} if there exists 
a permutation $\tau$ of the rows of $M_1$ and a permutation $\pi$
of its columns that makes $M_1$ identical to $M_2$.  If $\pi$
is known, $\tau$ is trivial to find by matching up identical rows
of $M_1$ and $M_2$.  Therefore, abusing notation somewhat, we
will sometimes call $\pi$ an {\em isomorphism} from $M_1$ to $M_2$, omitting
mention of $\tau$, and treat a matrix as a multiset of row vectors.

The results of the paper are organized as follows.  In 
Section~\ref{sect:preliminaries}, we give basic definitions and
review the {\em PC tree} of a circular-ones matrix~\cite{HsuMcC03,ShihHsuPlanarity}.
This gives a representation of all circular-ones orderings of
a circular-ones matrix.

In Section~\ref{sect:PCQuotients}, we present a notion of {\em quotient labels}
on the PC tree, which were developed in~\cite{Cur07}.
The matrix can be reconstructed from the tree
and its labels, establishing the quotient-labeled PC tree as a unique 
decomposition of a circular-ones matrix.

In Section~\ref{sect:matrixIsomorphism}, we 
give an algorithm that uses the quotient-labeled PC tree to test
isomorphism of circular-ones matrices that was also developed
in~\cite{Cur07}.  We define a notion of isomorphism
of quotient-labeled PC trees, show that two circular-ones matrices
are isomorphic if and only if their quotient-labeled PC trees are
isomorphic, and reduce the problem to testing whether the two matrices'
PC trees are isomorphic.  The running time is 
linear in the number of rows, columns and 1's of the matrix if
the circular-ones orderings are not provided, or linear in the
number of rows if a succinct representation of circular-ones matrices 
is provided.  

In Section~\ref{sect:Helly}, we reduce the problem
of testing whether two Helly circular-arc graphs are isomorphic
to testing whether two circular-ones matrices are isomorphic.
This gives a bound that is proportional to the number of vertices
and edges, or just proportional to the number of vertices, depending
on whether the graphs are represented with adjacency lists or with
suitable sets of circular arcs. A preliminary version of part of the
results of this section appeared in~\cite{LMSS08}.

In Section~\ref{sect:Gamma}, we use the fact that testing isomorphism of
$\Gamma$ circular-arc graphs and of convex-round graphs reduces to testing
isomorphism of circular-ones
matrices, giving an $O(n+m)$ or an $O(n)$ bound, depending on whether
a succinct representation is given. This leads to a new algorithm for testing
the isomorphism of proper circular arc graphs, which can run in $O(n)$ time
if two circular arc models are given.  

In Section~\ref{sect:hsu} we discuss the circular-arc isomorphism algorithm
of \cite{Hsu95}. We show a counterexample for this algorithm, and give a direction
suggested by Hsu for a possible fix.

\section{Preliminaries}\label{sect:preliminaries}

We consider simple undirected graphs $G$ and $G'$. We denote the number 
of vertices of $G$ by $n$ and the number of edges by $m$. We assume that 
$G'$ has the same number of vertices and edges as $G$, since otherwise it 
is trivial to see that the graphs are not isomorphic.

In this paper, we also let $n$ denote the number of rows of a binary matrix.
(Many of the matrices we deal with are derived from graphs, and have
one row for each vertex of the graph.)
The {\em size} of a binary matrix $M$, denoted $size(M)$,
is the number of rows plus the number of  columns plus the number of 1's; 
this is proportional to the number of words required to store the matrix 
using a standard sparse-matrix representation.  We will say that
an algorithm whose inputs are binary matrices runs in {\em linear time} 
only if it runs in time that is linear in this measure of the size of
the matrices.

By $N(v)$, we denote the set of neighbors of a vertex $v$.
By $N[v]$, we denote the {\em closed neighborhood} $\{v\} \cup N(v)$ of $v$.
If $U$ is a set of vertices then $N[U]$ is the union of $N[v]$ for 
all $v \in U$.

An (unrooted) tree $T$ is an undirected graph that is connected and has no
cycles.  
{\em Rooting} a tree $T$ at node $w$ consists of orienting all edges
so that they are directed from vertices that are farther from $w$ to
vertices that are closer to $w$, yielding a directed graph.  This gives
each vertex $u \neq w$ a unique outgoing edge $(u,v)$.  The neighbors
of $u$ can be classified as the (unique) {\em parent} of $u$ (the unique 
neighbor that is closer to $w$) and the {\em children} of $u$.
Once a tree has been rooted, we will continue to refer to the parent and 
children of $u$ as its neighbors.

A \emph{circular-arc model} of $G$ is a mapping from the vertices of $G$ 
to arcs on a circle such that two vertices are adjacent if and only if 
the corresponding arcs intersect. We represent a circular-arc model $\cal A$ by a 
cyclic doubly-linked list of the endpoints arcs.
Each vertex of $G$ has two endpoints in the model, one of them is a clockwise endpoint and the
other is a counterclockwise endpoint. 

A \emph{proper circular-arc model} is a circular-arc model in which no arc contains another. 
An \emph{interval model} is a circular-arc model whose circle has some 
point that is not contained in any arc.

Let $t = (t_1, t_2, \ldots, t_k)$ be a list, where each $t_i$ is a tuple
of integers.  
By {\em sorting a list of tuples}, we mean that we rearrange the
order of the $t_i$'s so that they are in increasing lexicographic order.
Let $L_1$ and $L_2$ be two lists of tuples.  Let $i$ be the first
position in which they differ.  Then we say that $L_1$ {\em precedes $L_2$
lexicographically} if the tuple at position $i$ of $L_1$ lexicographically
precedes the tuple at position $i$ of $L_2$, or else $L_1$ has length $i-1$.

\subsection{Bipartitive decomposition trees}

Let a {\em bipartitive tree $T$} on universe $V$ be an undirected tree 
such that its leaves are the elements of $V$, all internal nodes 
have degree at least three, and each internal node is labeled either
{\em prime} or {\em degenerate}.  
Let a {\em neighbor set} of a node $u$ denote the set of leaves in
a tree of the forest that results when $u$ is removed from $T$.
Since all internal nodes of $T$ have degree at least three, all members
of a neighbor set are leaves of $T$, so each neighbor set is a subset of $V$.
The {\em set family ${\cal F}(T)$ 
represented by $T$} consists of the following sets:

\begin{itemize}
\item A neighbor set of a prime node or the union of all but one of the neighbor
sets of a prime node;
\item Any union of at least one and fewer than all neighbor sets of a 
degenerate node.
\end{itemize}

Not all set families can be represented by a bipartitive tree.  Next,
we characterize those that can.

The {\em symmetric difference} $X \Delta Y$ of two sets $X$ and $Y$
is the set $(X \setminus Y) \cup (Y \setminus X)$.
Let us say that two subsets $X$ and $Y$ of universe $V$ {\em strongly overlap}
if $X \cap Y$, $X \setminus Y$, $Y \setminus X$ and $\overline{X \cup Y} =
V \setminus (X \cup Y)$ are all nonempty.

A {\em bipartitive set family on universe $V$} is a set family ${\cal F}$ with
the following properties:

\begin{itemize}
\item $\emptyset, V \not\in {\cal F}$

\item $\{x\} \in {\cal F}$ for all $x \in V$

\item $\overline{X} \in {\cal F}$ for all $X \in {\cal F}$

\item Whenever $X,Y \in {\cal F}$ strongly overlap, $X \cap Y$,
$X \cup Y$ and $X \Delta Y$ are all members of ${\cal F}$.
\end{itemize}

\begin{theorem}[\cite{CE80}]\label{thm:bipartitiveTree}
If $T$ is a bipartitive tree on universe $V$, then ${\cal F}(T)$ is a 
bipartitive set family on universe $V$.  Conversely, if ${\cal F}$
is a bipartitive set family, then there exists a unique bipartitive tree
$T$ such that ${\cal F} = {\cal F}(T)$.
\end{theorem}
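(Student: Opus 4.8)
The plan is to prove the two implications separately, throughout using the complement-closure axiom to regard ${\cal F}$ as a family of \emph{bipartitions} $\{X,\overline X\}$ of $V$ into two nonempty parts, with all of the trivial bipartitions $\{\{x\},\,V\setminus\{x\}\}$ present. Two bipartitions \emph{cross} when their defining sets strongly overlap; this relation does not depend on which side of each bipartition is named, and not crossing means the two bipartitions are nested, or disjoint, or have union $V$.

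For the forward direction, let $T$ be a bipartitive tree. Deleting an internal node $u$ splits $T$ into at least three subtrees, so the neighbor sets of $u$ form a partition ${\cal P}_u$ of $V$ into at least three nonempty blocks, and the first three axioms follow at once: every member of ${\cal F}(T)$ is a union of some but not all blocks of some ${\cal P}_u$, hence nonempty and proper; every singleton $\{x\}$ is the block of ${\cal P}_u$ cut off by the edge joining $x$ to the unique internal node $u$ adjacent to it; and complementing a set keeps it at the same node, by swapping the distinguished block for the union of the others at a prime node and by complementing the chosen sub-collection of blocks at a degenerate node. The remaining axiom, closure under strong overlap, rests on the lemma that every member $X$ of ${\cal F}(T)$ lives at a \emph{unique} internal node $u(X)$ — the node at which $X$ appears as a union of some but not all neighbor sets — which in turn reduces to the observation that bipartitions realised at two distinct internal nodes never cross (a short case analysis along the path between the two nodes). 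Granting this lemma: if $u(X)\neq u(Y)$, then $X$ and $Y$ do not strongly overlap; if $u(X)=u(Y)=u$ is prime, then $X,Y\in\{P,\overline P:P\in{\cal P}_u\}$ and no two such sets strongly overlap; and if $u(X)=u(Y)=u$ is degenerate, then $X$ and $Y$ are unions of blocks of ${\cal P}_u$, hence so are $X\cap Y$, $X\cup Y$ and $X\Delta Y$, each of which strong overlap forces to be a proper nonempty union of blocks, and therefore a member of ${\cal F}(T)$ living at $u$.

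For the converse we follow the template of the structure theorem for \emph{partitive} set families that underlies modular decomposition. Call a bipartition of ${\cal F}$ \emph{strong} if it crosses no bipartition of ${\cal F}$ (the trivial bipartitions are strong). One first shows that the strong bipartitions are pairwise non-crossing and are therefore exactly the edge-bipartitions of a tree $T$ on leaf set $V$ in which every internal node has degree at least three; in particular every neighbor set of every internal node of $T$ lies in ${\cal F}$. The heart of the argument is a \emph{dichotomy}: for each internal node $u$ of $T$, either every proper nonempty union of blocks of ${\cal P}_u$ belongs to ${\cal F}$, or else no union of at least two and at most $|{\cal P}_u|-2$ blocks belongs to ${\cal F}$; calling $u$ degenerate in the first case and prime in the second (a node of degree three falling under the first case, by the convention needed to make $T$ canonical), we label $T$ accordingly. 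One then verifies ${\cal F}(T)={\cal F}$: a member of ${\cal F}$ is either strong, hence an edge-bipartition of $T$ and so in ${\cal F}(T)$, or else crosses something, in which case one shows it must arise as a union of at least two neighbor sets of a degenerate node. Uniqueness is then immediate: applying the forward-direction analysis to any bipartitive tree $T'$ with ${\cal F}(T')={\cal F}$ shows that its edge-bipartitions are precisely the strong bipartitions of ${\cal F}$ and that the prime/degenerate label of each node is forced, so $T'$ coincides with the tree just built.

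The step I expect to be the main obstacle is the dichotomy. It is precisely here that the symmetric-difference part of the strong-overlap axiom is indispensable: partitive families are closed under intersection, union, and difference on overlapping pairs, and it is the extra closure under $\Delta$ — available exactly because ${\cal F}$ is complement-closed — that forces a node possessing any one ``diagonal'' union of blocks to possess them all. The proof is a careful closure computation showing that a single crossing pair $\{X,\overline X\}$, $\{Y,\overline Y\}$ at a node generates, by repeated application of the axiom (which produces $X\cap Y$, $X\cap\overline Y$, $\overline X\cap Y$, $X\Delta Y$ and their complements), every proper nonempty union of the blocks at that node, while the absence of any internal crossing pins the node down to the prime form. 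Two boundary matters also need attention: the degree-three case, where the prime and degenerate families literally coincide so a labelling convention must be fixed, and the trivial cases $|V|\le 2$, for which the statement is read with a separate convention for trees with fewer than two internal nodes.
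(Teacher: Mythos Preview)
The paper does not prove this theorem; it is stated with a citation to Cunningham and Edmonds~\cite{CE80} and used as a black box. There is therefore no proof in the paper to compare your proposal against.

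That said, your outline is the standard route to this result and is essentially sound. The forward direction is complete once you carry out the promised case analysis showing that sets realised at distinct internal nodes $u\neq v$ never strongly overlap: letting $W$ be the neighbor set of $u$ through the edge toward $v$ and $U$ the neighbor set of $v$ through the edge toward $u$, one has $W\cup U=V$, and then the four cases (whether $W\subseteq X$ or $X\subseteq V\setminus W$, and whether $U\subseteq Y$ or $Y\subseteq V\setminus U$) each force $X\subseteq Y$, $Y\subseteq X$, $X\cap Y=\emptyset$, or $X\cup Y=V$. Your treatment of the same-node cases is correct as written.

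For the converse, your plan follows the partitive-family template and correctly isolates the dichotomy as the crux, and you are right that the symmetric-difference closure is exactly what distinguishes the bipartitive axiom from the partitive one and is what drives the ``one diagonal union forces all of them'' step. Two remarks: first, you will also need the lemma that \emph{enough} strong bipartitions exist---specifically, that the finest strong bipartitions separating any two leaves give all the edges of a tree whose internal nodes have degree at least three---and this uses the overlap closure to show that a maximal family of pairwise-crossing members of ${\cal F}$ has a common refinement in ${\cal F}$. Second, your observation about degree-three nodes is genuine: as the paper states the definitions, a degree-three internal node contributes the same sets to ${\cal F}(T)$ whether labelled prime or degenerate, so literal uniqueness of $T$ requires a convention (typically: such nodes are declared degenerate). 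The paper glosses over this, as do most statements of the theorem.
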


\subsection{PC trees}

Let $M$ be a circular-ones matrix.  A row of $M$ can be thought of as the
bit-vector representation of a set, that is, it is
the set $X$ of columns of $M$ where the row has a 1.  Let $V$ denote
the columns of $M$ and let ${\cal R}$ denote the family of sets represented
by the rows.  Note that ${\cal R}$ is a set family on universe $V$.
Let ${\cal N}({\cal R})$ denote the family of subsets of $V$, excluding
$\emptyset$ and $V$ itself, that do not strongly overlap any member of
${\cal R}$.

\begin{lemma}[\cite{HsuMcC03}]
${\cal N}({\cal R})$ is a bipartitive set family on universe $V$.
\end{lemma}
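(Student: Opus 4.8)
The plan is to verify directly that ${\cal N}({\cal R})$ satisfies the four axioms of a bipartitive set family (the list preceding Theorem~\ref{thm:bipartitiveTree}). Three of the four are essentially immediate. The set $\emptyset$ and the set $V$ are excluded by the very definition of ${\cal N}({\cal R})$. For a singleton $\{x\}$, note that for any set $Y$ the set $\{x\} \setminus Y$ is either empty or all of $\{x\}$, so $\{x\} \cap Y$ and $\{x\} \setminus Y$ cannot both be nonempty; hence $\{x\}$ strongly overlaps nothing, and in particular no member of ${\cal R}$, so $\{x\} \in {\cal N}({\cal R})$. Closure under complement is also easy: strong overlap is symmetric in the sense that $X$ strongly overlaps $Y$ if and only if $\overline{X}$ strongly overlaps $Y$, because the four witnessing sets $X \cap Y$, $X \setminus Y$, $Y \setminus X$, $\overline{X \cup Y}$ for the pair $(X,Y)$ are exactly the four sets $\overline{X} \setminus Y$, $\overline{X \cup Y}$, $X \cap Y$, $X \setminus Y$ (in some order) for the pair $(\overline{X}, Y)$. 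So $X$ strongly overlaps no member of ${\cal R}$ iff $\overline{X}$ does, giving $\overline{X} \in {\cal N}({\cal R})$ whenever $X \in {\cal N}({\cal R})$.

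The substantive axiom is the last one: if $X, Y \in {\cal N}({\cal R})$ strongly overlap, then $X \cap Y$, $X \cup Y$ and $X \Delta Y$ are all in ${\cal N}({\cal R})$. First one must check these three sets are neither $\emptyset$ nor $V$; this follows from the fact that $X$ and $Y$ strongly overlap each other, which forces $X \cap Y \neq \emptyset$ (so $X \cup Y \neq \emptyset$) and $\overline{X \cup Y} \neq \emptyset$ (so $X \cup Y \neq V$ and hence $X \cap Y \neq V$), while $X \setminus Y$ and $Y \setminus X$ being nonempty handles $X \Delta Y$. The heart of the matter is then: assuming neither $X$ nor $Y$ strongly overlaps a given row set $R \in {\cal R}$, show that none of $X \cap Y$, $X \cup Y$, $X \Delta Y$ strongly overlaps $R$. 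I expect this to be the main obstacle, and I would attack it as follows. Fix $R$ and consider the partition of $V$ into the (up to) four Venn cells determined by $X$ and $Y$; the hypothesis says $R$ fails to strongly overlap each of $X$ and $Y$, which for $X$ means one of $R \cap X$, $R \setminus X$, $X \setminus R$, $\overline{R \cup X}$ is empty, and similarly for $Y$. This gives a small finite case analysis on which cells $R$ can meet. The cleanest way to organize it is to observe that "$R$ does not strongly overlap $X$" is equivalent to saying that, in the Boolean lattice generated by $R$ and $X$, the pair $\{R, X\}$ is laminar-or-complementary, i.e. one of $R \subseteq X$, $X \subseteq R$, $R \cap X = \emptyset$, $R \cup X = V$ holds. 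Combining the constraint from $X$ with the constraint from $Y$, one checks in each of the resulting combinations that $R$ stands in one of these four relations to each of $X \cap Y$, $X \cup Y$ and $X \Delta Y$ as well.

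To keep the case analysis manageable I would exploit symmetry aggressively: the three target sets and the ambient universe are all invariant (as a collection) under swapping $X \leftrightarrow Y$ and under complementing $X$, complementing $Y$, or complementing $R$ (using the symmetry of strong overlap noted above), so one only needs to treat one representative from each orbit of (relation-of-$R$-to-$X$, relation-of-$R$-to-$Y$) pairs. For instance, the representative case $R \subseteq X$ and $R \subseteq Y$ gives $R \subseteq X \cap Y$ and $R \subseteq X \cup Y$ immediately, and $R \cap (X \Delta Y) = \emptyset$ since $R$ lies in $X \cap Y$; the case $R \subseteq X$, $R \cap Y = \emptyset$ gives $R \cap (X \cap Y) = \emptyset$, $R \subseteq X \subseteq X \cup Y$, and $R \subseteq X \Delta Y$ (as $R \subseteq X \setminus Y$); the remaining representatives are handled the same way. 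Since the reductions collapse the nominal $4 \times 4$ grid down to a handful of essentially distinct cases, the verification is routine once set up, and it yields that $X \cap Y$, $X \cup Y$, $X \Delta Y$ all belong to ${\cal N}({\cal R})$. This establishes all four axioms, so ${\cal N}({\cal R})$ is a bipartitive set family on $V$.
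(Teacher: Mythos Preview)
Your argument is correct. Note, however, that the paper does not actually prove this lemma: it is quoted from~\cite{HsuMcC03} without proof, so there is no in-paper argument to compare against. Your direct verification of the four bipartitive axioms is the standard approach and goes through as you describe.

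One small remark on the symmetry bookkeeping in the fourth axiom. Swapping $X \leftrightarrow Y$ and complementing $R$ are genuine symmetries of both the hypotheses and the targets (the latter because ``$R$ does not strongly overlap $Z$'' is invariant under $R \mapsto \overline{R}$, and the targets do not mention $R$). Complementing $X$ alone is not quite a symmetry of the target collection $\{X\cap Y,\ X\cup Y,\ X\Delta Y\}$: under $X \mapsto \overline{X}$ these become $Y\setminus X$, $\overline{X\setminus Y}$, $\overline{X\Delta Y}$, which is a different triple. You can rescue this by invoking the already-established closure of ${\cal N}({\cal R})$ under complementation, so that proving membership for any set is equivalent to proving it for its complement; then complementing $X$ does become a usable reduction. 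In any case, the unreduced $4\times 4$ case analysis is short enough that this is a matter of presentation, not correctness: each of the sixteen pairs (relation of $R$ to $X$, relation of $R$ to $Y$) either forces a contradiction with the strong overlap of $X$ and $Y$ (e.g.\ $R\subseteq X$ together with $Y\subseteq R$ gives $Y\subseteq X$) or immediately yields one of the four non-overlap relations between $R$ and each of $X\cap Y$, $X\cup Y$, $X\Delta Y$, exactly as in the two representative cases you wrote out.
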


It follows that ${\cal N}({\cal R})$ is represented by a bipartitive tree.
For historical reasons, the
prime nodes are known as {\em C nodes}, the degenerate nodes
are known as {\em P nodes} and the bipartitive tree is called a
{\em PC tree}~\cite{HsuMcC03,ShihHsuPlanarity}.  
Figure~\ref{fig:PCExample} gives an example.  When drawing a PC tree,
we use the convention
of representing a C node with a double circle and a P node with
a dot. In this figure, the neighbor sets of $c$
are $\{1,2,3\}$, $\{4\}$, $\{5\}$, $\{6\}$, and $\{7,8,9,10\}$.  
Each of these sets and the union of all but 
any one of these sets is a member of ${\cal N}({\cal R})$.  For example,
the neighbor set $\{7,8,9,10\}$ is a subset of rows 
$\{3,5,13,14\}$, it contains rows
$\{6,7,8\}$, it does not strongly overlap row $9$ because the union
of row 9 and $\{7,8,9,10\}$ is the entire column set, and it is disjoint
from all other rows.  Since it does not strongly overlap any row,
it is a member of ${\cal N}({\cal R})$.  Similarly, the union of
all neighbor sets other than $\{7,8,9,10\}$ (the complement of $\{7,8,9,10\}$),
is a member of ${\cal N}({\cal R})$.  However, because $c$ is a C node (prime)
the union of neighbor
set $\{6\}$ and $\{7,8,9,10\}$, which is neither the union of one
neighbor set nor the union of all but one, is not a member of 
${\cal N}({\cal R})$.  This is verified by observing that
it strongly overlaps row 12.

The neighbor sets of $a$ in the figure are $\{1\}$, $\{2\}$, $\{3\}$, $\{4,5, \ldots, 10\}$.
Because it is a P node (degenerate), every union of at least one and fewer 
than all of these sets is a member of ${\cal N}({\cal R})$, and this
is easily verified using similar checks.

Only unions of neighbor sets of a single internal node can be members
of ${\cal N}({\cal R})$.  For example, $\{1,2,10\}$ is not a union of
neighbor sets of a single node, and it is not a member of ${\cal N}({\cal R})$
because it strongly overlaps rows $2$, $4$, $8$, $10$, $11$, and $14$.

Note that ${\cal R}$ has the circular-ones property, and
the figure depicts a way to cyclically order the edges incident to
each node such that in the resulting tree, every member of ${\cal R}$
is consecutive in the circular ordering of leaves.  This is an
example of a general phenomenon:

\begin{theorem}[\cite{HsuMcC03}]
Let $T$ be the bipartitive tree for ${\cal N}({\cal R})$, where ${\cal R}$
is the set family represented by rows of a circular-ones matrix with column
set $C$.  Then:

\begin{itemize}
\item The edges incident to internal nodes can be cyclically ordered
in such a way that the resulting cyclic order of leaves is a circular-ones
ordering of the matrix;

\item Reversing the cyclic order of edges about a prime node imposes a
new circular-ones ordering on the leaves.

\item Arbitrarily permuting the cyclic order of edges about a degenerate node
imposes a new circular-ones ordering on the leaves.

\item All circular-ones orderings of the leaves are obtainable by
a sequence of these two operations.
\end{itemize}
\end{theorem}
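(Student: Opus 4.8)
The plan is to recast everything in terms of planar embeddings of $T$. Call a cyclic order of the edges at an internal node a \emph{rotation}, a choice of a rotation at every internal node an \emph{embedding} $E$, and write $\sigma_E$ for the cyclic order induced on the leaves by walking around $T$ under $E$. Two elementary facts set up the translation. First, a cyclic order $\sigma$ of the column set $C$ is a circular-ones ordering of $M$ precisely when every row, viewed as a member of ${\cal R}$, is an \emph{arc} of $\sigma$ (a set of consecutive columns); equivalently no row strongly overlaps an arc of $\sigma$, since two arcs of a cyclic order strongly overlap exactly when they properly interleave. Second, for any tree with leaf set $C$ the two leaf-sets obtained by deleting an edge $(u,v)$ are the neighbor set of $u$ toward $v$ and its complement; hence $T$ has an embedding $E$ with $\sigma_E=\sigma$ if and only if every neighbor set of $T$ is an arc of $\sigma$.

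With this dictionary the theorem reduces to two statements. \textbf{(A)} \emph{Every neighbor set of $T$ is an arc of every circular-ones ordering of $M$.} \textbf{(B)} \emph{Among the embeddings $E$ with $\sigma_E$ a circular-ones ordering of $M$, any one is obtained from any other by reversing rotations at prime nodes and arbitrarily permuting rotations at degenerate nodes.} Granting these: since $M$ is circular-ones it has some circular-ones ordering $\sigma_0$, which by (A) and the second elementary fact equals $\sigma_{E_0}$ for an embedding $E_0$ --- this is the first bullet; by (A) again every circular-ones ordering is some $\sigma_E$, and (B) says all such $E$ lie in one orbit of the two operations, which is the fourth bullet; and the ``forward'' half of (B) --- that each operation keeps one inside the class --- is the second and third bullets.

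For (A) I would argue by contradiction. Suppose a neighbor set $S$, the leaf-set on one side of an edge $e$ of $T$, is not an arc of a circular-ones ordering $\sigma$; then in $\sigma$ the set $S$ breaks into $t\ge 2$ maximal runs $R_1,\dots,R_t$, separated by runs $Q_1,\dots,Q_t$ of $\overline{S}$, in cyclic order $R_1,Q_1,R_2,Q_2,\dots,R_t,Q_t$. Each row $X$ is an arc not strongly overlapping $S$, so one of $X\subseteq S$, $\overline{S}\subseteq X$, $X\subseteq\overline{S}$, $S\subseteq X$ holds; the middle two are impossible for a proper arc containing every run of $\overline{S}$ or of $S$ (such an arc would be all of $C$), so every row lies inside a single run $R_i$ or a single run $Q_j$. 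Consequently every proper nonempty union of the $2t$ parts $R_1,Q_1,\dots,R_t,Q_t$ fails to strongly overlap any row and so lies in ${\cal N}({\cal R})={\cal F}(T)$. Take the part-union $R_1\cup Q_2$ (legitimate since $t\ge2$): it meets both sides of $e$ properly --- it contains $R_1\subsetneq S$ but not all of $S$, and $Q_2\subsetneq\overline{S}$ but not all of $\overline{S}$ --- so it is not a union of the neighbor sets of any single node of $T$, contradicting membership in ${\cal F}(T)$ via Theorem~\ref{thm:bipartitiveTree}. This proves (A).

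For (B) I would use that any two embeddings of a tree are linked by a sequence of single-node rotation changes, and analyse one such change at a node $u$ with neighbor sets $S_1,\dots,S_k$: it reorders the blocks $S_i$ in the leaf order without touching their internal orders. A row $X$ not strongly overlapping any $S_i\in{\cal N}({\cal R})$ either sits inside one unchanged block (so stays an arc) or is a union $\bigcup_{i\in I}S_i$. When $u$ is degenerate, the definition of ${\cal F}(T)$ puts \emph{every} union of its neighbor sets in ${\cal N}({\cal R})$, and one checks this is consistent with $\bigcup_{i\in I}S_i$ being a row only if $|I|\in\{1,k-1\}$ or $k=3$ --- cases in which $\bigcup_{i\in I}S_i$ is an arc under every permutation of the blocks, so arbitrary permutations at degenerate nodes are harmless. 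When $u$ is prime, $I$ is a cyclic interval of the current rotation because $X$ is an arc, and reversal preserves cyclic intervals; conversely, the prime/degenerate labelling of Theorem~\ref{thm:bipartitiveTree} is precisely chosen so that at a prime node the family of such interval-constraints (coming from rows) pins the rotation down to reversal, so no non-reversal change at a prime node can keep every row an arc. Combining the two analyses gives (B). I expect this last point --- the rigidity of prime nodes, i.e.\ that the rows of $M$ really do determine the rotation at each prime node up to reflection --- to be the main obstacle, since it is where the combinatorics of ${\cal N}({\cal R})$ and the prime/degenerate distinction must be exploited most carefully; an alternative that sidesteps it is to prove the whole theorem by induction on the number of rows, using the PC-tree ``reduce'' operation and checking that it transforms the circular-ones-realising embeddings of $M$ into exactly those of $M$ with the added row.
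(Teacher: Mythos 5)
The paper does not actually prove this theorem; it quotes it from \cite{HsuMcC03} and offers only the informal remark that reversing a consecutive block of columns disrupts a row exactly when the block strongly overlaps it, so your proposal has to stand on its own --- and as written it does not yet. The reduction to (A) and (B) is a sound plan, but both halves have problems. In (A), the step dismissing the cases $S\subseteq X$ and $\overline{S}\subseteq X$ is wrong: when $S$ is split into $t\ge 2$ runs, a proper arc can perfectly well contain all of $\overline{S}$ --- take the complement of any nonempty proper sub-arc of a single run $R_i$. Hence ``every row lies inside a single run'' is unjustified; rows of these two extra types must be carried through the argument (they can be: a short check shows a row with $S\subseteq X$ or $\overline{S}\subseteq X$ never strongly overlaps a union of whole runs, so $R_1\cup Q_2\in{\cal N}({\cal R})$ still follows, but not for the reason you give). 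The same oversight recurs in (B): a row need not sit inside one block or be a union of blocks of $u$; it may contain all blocks except part of one (equivalently, its complement lies inside a single block) --- compare row 9 in the discussion of Figure~\ref{fig:PCExample}. That case is again harmless, since the complement of an arc is an arc, but it is absent from your case analysis.

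The substantive gap is the one you flag yourself: the rigidity of prime nodes, which is exactly what the fourth bullet needs. From the definitions, all that ``prime'' gives you is that every union of at least two and at most $k-2$ neighbor sets of $u$ strongly overlaps some row; you must still show that these overlap constraints force any rotation of the blocks at $u$ that keeps every row an arc to agree with the original one up to rotation and reversal. Saying the labelling ``is precisely chosen so that'' this holds is circular --- Theorem~\ref{thm:bipartitiveTree} only tells you which unions of neighbor sets lie in ${\cal F}(T)$, not that the rows pin down the rotation. This missing lemma is essentially the paper's later (also unproved here) assertion that the quotient at a C node has a unique circular-ones ordering up to rotation and reversal, and it needs its own argument (for instance by induction on the number of blocks, or via overlap-connectivity of the quotient rows); your fallback induction-on-rows route via the PC-tree reduce step is likewise only gestured at. Finally, the backward direction of your second ``elementary fact'' (if every neighbor set is an arc of $\sigma$, then some choice of rotations realizes $\sigma$) deserves at least a sentence of proof, since both the first and fourth bullets rest on it.
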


This gives a convenient data structure, the {\em PC tree}, for
representing all circular-ones orderings of a circular-ones matrix.

For example, in Figure~\ref{fig:PCExample}, permuting the counterclockwise
cyclic order of neighbors about $a$ so that it is $(1,c,2,3)$ and reversing
the cyclic order of neighbors about $b$ so that it is
$(c,10,9,8,7)$ imposes a new cyclic leaf order on the tree,
$(1,4,5,6,10,9,8,7,2,3)$, which is easily seen to be a circular-ones
ordering.

\begin{figure}
\centerline{\includegraphics[]{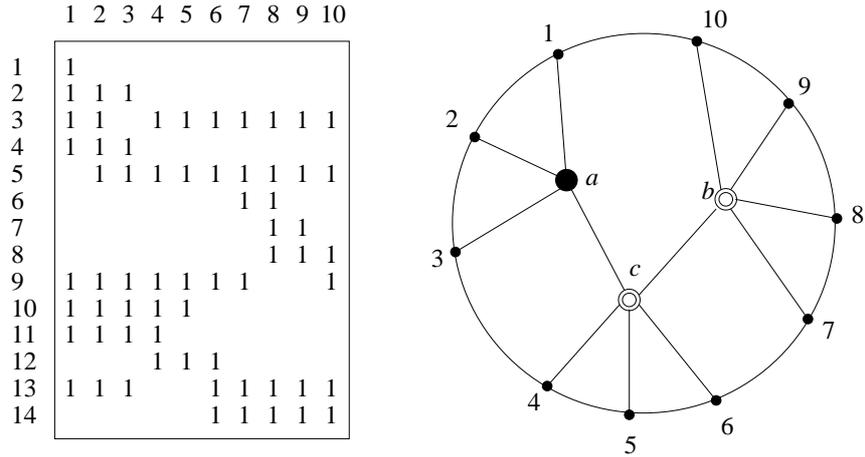}}
\caption{The PC tree for a circular-ones matrix.  Nodes $b$ and $c$
are C nodes (double circles) and node $a$ is a P node.}\label{fig:PCExample}
\end{figure}

To gain an insight into why this works, consider a circular-ones ordering
of columns of a circular-ones matrix, let $X$ be a circularly consecutive
set of columns, and let $R$ be the set represented by some row.  If $X$
is removed and reinserted in reverse order, it will disrupt the consecutiveness
of $R$ if and only if it strongly overlaps $R$.  Since ${\cal N}({\cal R})$
is the family of sets that don't strongly overlap any column, they are
the sets that can be reversed in the cyclic order without disrupting
the circular consecutiveness of any row.  Each allowed rearrangement of
the PC tree corresponds to a sequence of such reversals, and each such
reversal is allowed by the PC tree.  

The PC tree was first described by Shih 
and Hsu~\cite{ShihHsuPlanarity}.  Its relationship to bipartitive set families and
the circular-ones orderings of matrices was first described in~\cite{HsuMcC03}.

It takes time linear in the size of a circular-ones matrix
to build the PC tree even
when a circular-ones ordering is not given as part of the 
input~\cite{HsuMcC03,ShihHsuPlanarity}.
As part of the output, a circular ordering of the edges incident to
each internal node is given, which imposes a circular-ones ordering on
the leaves.  This gives a representation of all circular-ones orderings
of the matrix in linear time.

Henceforth, we will assume for simplicity that every row 
of a circular-ones matrix has at least one 1 and one 0; any row without
this property is irrelevant
to the circular-ones arrangements.  

\section{Quotient labels for the PC tree}\label{sect:PCQuotients}

In this section, we give a way to label the nodes of the PC tree of
a circular-ones matrix with {\em quotients} so that the matrix can be 
reconstructed from the labeling.  These results were developed 
in~\cite{Cur07}, and the scheme is similar to ones
developed for PQ trees in~\cite{KorteMoehringPQ,LB79,McCSODA04}.

Recall that we assume that every row of the 
circular-ones matrix $M$ has at least one 1 and at least one 0, 
since a row that is all zeros or all ones has no
effect on the circular-ones orderings of the matrix.  Suppose
that $M$ has a circular-ones ordering of columns.  
Consider the ordered tree that results from the PC tree of $M$
by the cyclic
ordering about each node that is forced by the cyclic leaf order
corresponding to the circular-ones ordering.
For each row $R$
that has at least two 1's we perform the following procedure.
Root the PC tree at a leaf corresponding to a column with a 0 in $R$.
We maintain the order about each node $u$ in the following way. If
 $(w_1, w_2, \ldots, w_k)$ is the counterclockwise order about $u$
and $w_i$ becomes the parent of $u$, then $(w_{i+1}, w_{i+2}, \ldots,
w_k, w_1, \ldots, w_{i-1})$ becomes the linear order of children
in the resulting ordered {\em rooted} tree.
Let $u$ be the least common ancestor of the 1's in $R$.  Let $X$
be the leaf descendants of a child of $u$.  Since  $X \in {\cal N}({\cal R})$,
$X$ does not strongly overlap $R$.  
Therefore, $X$ is either a subset
of $R$ or disjoint from $R$.  It follows that $R$ consists of the leaf
descendants of two or more children of $u$.   Moreover, since the cyclic
order of edges about $u$ gives the circular-ones orderings of $M$, $R$ consists
of the neighbor sets of $u$ through a consecutive set $R'$ of children of $u$.
Let $R'$ be the {\em projection} of $R$ on $u$.    
Figure~\ref{fig:projection} illustrates the concept.

\begin{figure}
\centerline{\includegraphics[]{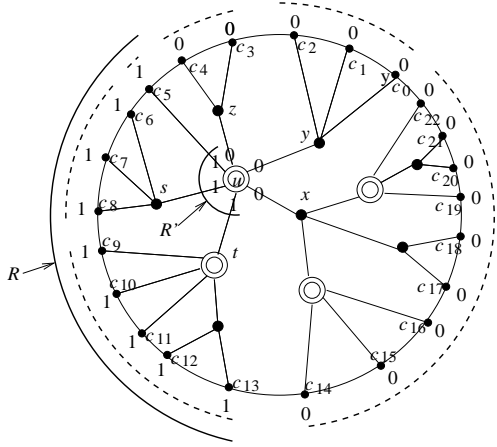}}
\caption{The projection $R'$ of a row $R$ of a circular-ones matrix.  
Each element of the row corresponds to a leaf of the PC tree.  Rooting 
the tree at any leaf where the row has a 0 and then finding the least 
common ancestor of the 1's gives the node $u$ that the row projects to.  
The projection makes up a row vector in the quotient matrix at $u$
that has one column for each neighbor of $u$, ordered in cyclic order.
Because every neighbor set is a member of ${\cal N}({\cal R})$,
the columns where a row has 1's will always be a union of neighbor sets 
of $u$.   In this case, $R$ has 1's in $\{c_5, \ldots, c_{13}\}$, and
the neighbor sets of $u$ (dashed arcs) are $\{c_0, c_1, c_2\}$, 
$\{c_3,c_4\}$, $\{c_5\}$, $\{c_6, c_7, c_8\}$, 
$\{c_{9}, c_{10}, \ldots, c_{13}\}$, and $\{c_{14}, c_{15}, 
\ldots, c_{22}\}$.  The ones that are subsets of $R$ are reachable
through neighbors $C_5, s, t$.  The projection $R'$ of $R$ is a row
of $u$'s quotient matrix that has 1's in columns $C_5, s, t$, and
0's in columns $x,y,z$.}\label{fig:projection}
\end{figure}

As a special case, if $R$ has only one 1, let $c$ be the column where
the 1 occurs, and let $u$ be its neighbor.
We consider the projection $R'$ of $R$ to be on $u$, and to consist
of $u$'s neighbor set $\{c\}$.  

When this projection has been performed on all rows, 
each internal node $u$ has received
the projection of zero or more rows of $M$, and we represent each projection
with a {\em row vector} in a {\em quotient matrix} whose columns are neighbors
of $u$.  The row has a 1 in column $w$ if the leaves reachable through $w$
are 1's of $R$ and a 0 if they are 0's in $R$.
The result is a matrix whose rows are sets of neighbors of $u$.
Note that the quotient at a node may be empty.  

Figure~\ref{fig:PCQuotients} illustrates the quotients for the example
of Figure~\ref{fig:PCExample}.  (The tree has been rooted at node $c$;
the motivation for this is explained below.)

\begin{figure}
\centerline{\includegraphics[]{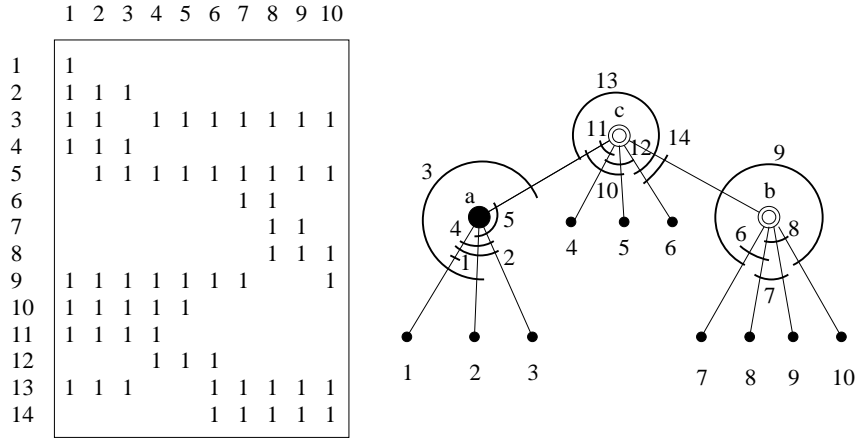}}
\caption{The quotients at the 
internal nodes of Figure~\ref{fig:PCExample}.
The tree has been rooted at its center.
Rows $\{1,2, \ldots, 5\}$ project to node $a$, rows $\{6,7,8,9\}$
project to node $b$, and rows $\{10, 11, \ldots, 14\}$ project
to node $c$.}\label{fig:PCQuotients}
\end{figure}

\begin{lemma}\label{lem:PQuotients}
At a P node $u$ with $k$ neighbors, each row of the quotient matrix consists of
either $k-1$ 1's and one 0 or $k-1$ 0's and one 1.
\end{lemma}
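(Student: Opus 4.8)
The plan is to read the claim off the bipartitive structure of ${\cal N}({\cal R})$ localized at $u$, exploiting the single fact that $u$, being a P node, is \emph{degenerate}: every union of at least one and fewer than all of its neighbor sets lies in ${\cal N}({\cal R})$, and in particular fails to strongly overlap every row of $M$.

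First I would identify the shape of an arbitrary quotient row at $u$. If a row $R$ of $M$ has a single $1$, the construction puts its projection on $u$ equal to one neighbor set $\{c\}$, giving a quotient row with one $1$ and $k-1$ $0$'s, which is already one of the two allowed shapes; so assume $R$ has at least two $1$'s and projects to $u$. Rooting the PC tree at a leaf where $R$ has a $0$, let $u$ be the least common ancestor of the $1$'s of $R$, with parent $p$ and children $c_1,\dots,c_{k-1}$, and let $X_p,X_1,\dots,X_{k-1}$ be the corresponding neighbor sets of $u$ in the unrooted bipartitive tree; these are nonempty and partition the column set $V$. As noted in the discussion preceding the lemma, each child's descendant set $X_i$ is contained in $R$ or disjoint from $R$; moreover the root lies in $X_p$ and every $1$ of $R$ lies in the subtree of $u$, so $X_p\cap R=\emptyset$. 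Consequently $R$ is exactly the union of the neighbor sets of $u$ it contains, say $j$ of them, namely those reached through a set $A$ of children, and the quotient row of $R$ has exactly $j$ ones (the columns in $A$) and $k-j$ zeros (the parent column together with the children outside $A$). Since at least two children meet $R$ by the choice of $u$ as the least common ancestor, $2\le j\le k-1$, so it remains to rule out $2\le j\le k-2$, i.e. to show $j=k-1$.

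The core of the argument is a short overlap computation, using that $k\ge 3$ for every internal node of a bipartitive tree and that $u$ is degenerate. Assume $2\le j\le k-2$. Choose a neighbor set $X_a\subseteq R$ (possible since $j\ge 1$) and a neighbor set $X_b$ with $X_b\cap R=\emptyset$ (possible, and even with a further neighbor set left disjoint from $R$, since $R$ contains only $j\le k-2$ of the $k$ neighbor sets). Put $Y=X_a\cup X_b$, a union of two neighbor sets of $u$; since $2<k$ and $u$ is degenerate, $Y\in{\cal N}({\cal R})$, so $Y$ does not strongly overlap $R$. But $Y\cap R=X_a\neq\emptyset$; $Y\setminus R=X_b\neq\emptyset$; $R\setminus Y=R\setminus X_a$ is nonempty because $R$ is a union of $j\ge 2$ neighbor sets one of which is $X_a$; and $\overline{Y\cup R}=\overline{R\cup X_b}$ is nonempty because $R\cup X_b$ is a union of $j+1\le k-1$ of the $k$ neighbor sets partitioning $V$. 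Thus $Y$ strongly overlaps $R$, a contradiction. Hence $j=k-1$, and the quotient row of $R$ consists of $k-1$ ones and one zero.

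I expect the main difficulty to be bookkeeping rather than ideas: correctly recognizing the ``upward'' set $X_p$ obtained after rooting as a genuine neighbor set of $u$ in the unrooted bipartitive tree, checking all relevant neighbor sets are nonempty, and verifying that the three neighbor sets used in the overlap computation ($X_a\subseteq R$, $X_b$ disjoint from $R$, and one more disjoint from $R$ to witness $\overline{R\cup X_b}\neq\emptyset$) are simultaneously available — which is exactly what the bounds $j\ge 2$, $j\le k-2$ and $k\ge 3$ provide.
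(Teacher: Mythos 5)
Your proof is correct and follows essentially the same route as the paper's: assume the projection uses between $2$ and $k-2$ neighbors, take the union of one neighbor set inside $R$ and one disjoint from $R$, and use the degeneracy of the P node to contradict membership of this union in ${\cal N}({\cal R})$ because it strongly overlaps $R$. Your version merely spells out the four strong-overlap conditions and the boundary cases (single-$1$ rows, $j\ge 2$ from the LCA choice) that the paper leaves implicit.
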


\begin{proof}
Let $M$ be a circular-ones matrix, and
let ${\cal R}$ be the family of sets of columns obtained by considering
each row to be the bit-vector representation of a set.

By the definition of a P node, every union of 
neighbor sets of $u$ is a member of 
${\cal N}({\cal R})$.  
Let $R'$ be the projection of some row $R$ to the quotient
at $u$.  Suppose $R'$ consists of more than one and fewer than $k-1$
neighbors of $u$.  Then $R$ consists of more than one and fewer than
$k-1$ neighbor sets of $u$.
Let $x$ and $y$ be two neighbors of $u$, one in $R'$
and one not.  Let $X$ and $Y$ be the neighbor sets of $u$ reachable
through $x$ and $y$.  Then $X \cup Y$ is a union of neighbor sets
of $u$ that strongly overlaps $R$.  This disqualifies it as a member
of ${\cal N}({\cal R})$, contradicting the fact that every union
of at least one and fewer than all neighbor sets of a P node is a 
member of ${\cal N}({\cal R})$.
\end{proof}

The quotient at node $a$ of Figure~\ref{fig:PCQuotients} 
illustrates the phenomenon.  The projections of rows 2 and 4
exclude only neighbor $c$, the projection of row 5 excludes only neighbor
1, the projection of row 3 excludes only neighbor 3, and
the projection of row 1 contains only neighbor 1.

\begin{lemma}
Any cyclic ordering of the quotient at a P node is a circular-ones ordering
of it.  The quotient at a C node has a unique circular-ones ordering,
up to rotation and reversal.
\end{lemma}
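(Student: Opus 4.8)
The plan is to prove the two statements separately, both by exploiting the correspondence between circular-ones orderings of the quotient at $u$ and the cyclic orderings of the edges about $u$ in the PC tree, as established in the PC tree theorem from Section~\ref{sect:preliminaries}.

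First I would handle the P node case. Let $u$ be a P (degenerate) node with neighbors $w_1, \ldots, w_k$, and let $Q$ be its quotient matrix, with one column per neighbor. By Lemma~\ref{lem:PQuotients}, every row of $Q$ is either a single $1$ (with the rest $0$'s) or a single $0$ (with the rest $1$'s); equivalently, in any ordering of the columns, the $1$'s form a single block or the $0$'s form a single block, since a block of size one or of size $k-1$ is trivially consecutive (and wraps trivially). Hence \emph{every} permutation of the columns of $Q$ is a circular-ones ordering of $Q$. Alternatively, and more in the spirit of the surrounding development, one can argue from the PC tree theorem: arbitrarily permuting the cyclic order of edges about a degenerate node yields a new circular-ones ordering of the leaves, and the induced order on the neighbors of $u$ is exactly the induced order on the columns of $Q$; since every row $R$ that projects to $u$ has a projection $R'$ that is consecutive in the neighbor-order precisely because the leaf-order is a circular-ones ordering, we conclude that every cyclic order of the columns of $Q$ is a circular-ones order. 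This direction is the easy one.

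For the C node case, let $u$ be a C (prime) node. I would first establish existence: by the PC tree theorem, there is a cyclic ordering of the edges about $u$ that, together with appropriate orderings at the other nodes, yields a circular-ones ordering of the leaves; restricting this to the neighbors of $u$ gives a circular-ones ordering of the columns of $Q$ (each projected row $R'$ is consecutive, as noted above). Moreover, reversing the cyclic order of edges about a prime node again gives a circular-ones ordering, so the reversal of this column order is also circular-ones; this shows the circular-ones ordering of $Q$ is not unique only up to the rotations inherent in cyclic orders, but is determined up to rotation and reversal \emph{at most}. The real content is uniqueness: I must show there is no circular-ones ordering of $Q$'s columns other than these. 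The key step is to observe that a circular-ones ordering $\sigma$ of the columns of $Q$ (i.e., of the neighbors of $u$) can be spliced back into the PC tree — replacing the cyclic order about $u$ by $\sigma$ while keeping the orders at all other nodes — to produce a new cyclic order of the leaves in which every row of $M$ is still consecutive. For rows projecting to $u$ this is immediate from $\sigma$ being circular-ones for $Q$; for rows projecting elsewhere, one checks that the neighbor sets of $u$ each appear as a consecutive block in the leaf order and the row is a union of such blocks on one side, so rearranging the blocks' cyclic order does not affect it. Thus $\sigma$ induces a legal circular-ones ordering of $M$, which by the last bullet of the PC tree theorem must be obtainable from the canonical one by reversals about prime nodes and permutations about degenerate nodes; restricting attention to the effect on the cyclic neighbor-order of the \emph{prime} node $u$, the only operations that change it are the (single) reversal about $u$ itself, so $\sigma$ is the canonical neighbor-order of $u$ up to rotation and reversal.

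The main obstacle I anticipate is making the splicing argument fully rigorous: one must argue carefully that a circular-ones ordering of the small quotient matrix $Q$ really does lift to a circular-ones ordering of all of $M$, i.e., that no row $R$ projecting to some other node $v \neq u$ is broken when we reshuffle the neighbors of $u$. This requires knowing that the set of leaves reachable from $v$ through the direction of $u$ forms a single contiguous arc both before and after the reshuffle, and that $R$ (viewed in the leaf order) is an arc not separating the interior of that contiguous block — which follows from $R$ being a union of neighbor sets of $v$ together with the consecutiveness guarantees of the PC tree, but needs to be spelled out. Once that lifting lemma is in hand, the uniqueness conclusion for the C node is a direct appeal to the ``all circular-ones orderings are obtainable by reversals and permutations'' clause of the PC tree theorem, localized to the node $u$.
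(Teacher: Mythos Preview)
Your proof is correct, and for the P-node half it coincides with the paper's (both simply invoke Lemma~\ref{lem:PQuotients}). For the C-node half, however, you take a genuinely different route.

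The paper argues \emph{internally} to the quotient: for a C node $u$ with $k$ neighbors, it shows that every set of neighbors of size strictly between $1$ and $k-1$ strongly overlaps some row of the quotient matrix. It does this by observing that the corresponding union $X$ of neighbor sets of $u$ lies outside $\mathcal{N}(\mathcal{R})$ (since $u$ is prime), hence strongly overlaps some row $R$ of $M$; a short case analysis on where $R$ projects then forces $R$ to project to $u$, so the strong overlap descends to the quotient. This means the PC tree of the quotient matrix itself consists of a single C node, whence uniqueness up to rotation and reversal.

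You instead argue \emph{externally}, by lifting: any circular-ones ordering $\sigma$ of the quotient is spliced back into the PC tree at $u$ to produce a circular-ones ordering of all of $M$, and then the last clause of the PC-tree theorem forces $\sigma$ to be the canonical neighbor-order of $u$ or its reversal. Your lifting step is sound: for a row $R$ projecting to some $v \neq u$, the fact that $R$ is a union of neighbor sets of $v$ gives the clean dichotomy that either $R$ or $\overline{R}$ lies entirely inside a single neighbor set of $u$, so reshuffling the blocks at $u$ cannot break it. (Your phrase ``a union of such blocks on one side'' undersells this dichotomy slightly, but your sketch of the obstacle points in the right direction.) The paper's approach has the merit of exhibiting the PC tree of the quotient explicitly and staying local to $u$; yours has the merit of reusing the global structure theorem wholesale and avoiding the case analysis on where a strongly-overlapping row projects.
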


\begin{proof}
Again, let $M$ be a circular-ones matrix, and
let ${\cal R}$ be the family of sets of columns obtained by considering
each row to be the bit-vector representation of a set.

The result follows immediately for P nodes by Lemma~\ref{lem:PQuotients}.
Suppose $u$ is a Q node and let $k$ be the number of neighbors
of $u$.  Let $X$ be an arbitrary union of more than one and fewer than $k-1$
neighbor sets of $u$.  By the definition of a Q node, 
$X \not\in {\cal N}({\cal R})$.  Therefore, it strongly overlaps
some row $R$ of $M$.  Let $y$ be a column not in $X \cup R$.
Root the PC tree at $y$.  If $R$ projects to a proper descendant of $u$,
then $R$ is either a subset of $X$ or disjoint from $X$, a contradiction
to the strong overlap.
If $R$ projects to a proper ancestor of $u$, then $X \subset R$, also a 
contradiction.  If $R$ projects to a node that is neither an ancestor
nor a descendant of $u$, then $R$ is disjoint from $X$, once again
a contradiction.  Therefore, $R$ projects on $u$, and the projections
$X'$ and $R'$ of $X$ and $R$ on $u$ are strongly overlapping sets of 
neighbors of $u$.

We conclude that every set $X$ of more than one and fewer than $k-1$
neighbors of $u$ strongly overlaps some row $R$ of the quotient at $u$.
Reversing the order of members of $X$ in the cyclic ordering of
neighbors about $u$ disrupts consecutiveness of $R$.  This implies that
the PC tree of the quotient matrix at $u$ has a single internal node,
a C node.  The circular-ones orderings of this quotient are unique up
to rotation and reversal.
\end{proof}

As an illustration, it is easily verified
in Figure~\ref{fig:PCQuotients} that every ordering
of columns of the quotients at P nodes $a$ is a circular-ones ordering,
and that the quotient at C nodes $b$ and $c$ each have only two circular-ones 
orderings, one of which is depicted and the other of which
is its cyclic reversal.

\subsection{Computing the quotient-labeled PC tree in time linear in $size(M)$}

Note that the linear-time PC-tree construction algorithm of~\cite{HsuMcC03}
gives a circular-ones arrangement of the leaves, but does not label
the nodes with their quotient.  
Let $u$ be the least common ancestor of a row $R$ when the tree
is rooted at some leaf not in $R$.
Let a node be
{\em black} if all of its leaf descendants are in $R$.  Because
all neighbor sets of $u$ are either subsets of $R$ or disjoint from it,
the black nodes consist of all ancestors of the leaves in $R$, up to,
and possibly including $u$, if all of its children are black.
The black children of $u$ are the ones corresponding
to the projection of $R$.

To find the projection of $R$, we
can therefore blacken the leaves that are members of $R$.  When we blacken
a node we increment a counter in the parent, so that it keeps track of
how many blackened children it has.  When a node's counter is incremented
to a value that equals its number of children, we blacken it, and increment
its parent's counter.  It is easily seen by induction on the height of
a node that a node is blackened by this procedure if and only if it is black.

The procedure halts when no new vertex can be marked.
The blackened nodes induce a rooted forest, which is 
a rooted tree if $u$ is black.  Every internal node of this forest
or tree has at least two black children.  Therefore, the number of black
internal nodes is bounded by the number of black leaves, which is the size of $R$.
Since the procedure spends $O(1)$ time whenever it blackens a node,
it takes $O(|R|)$ time to find the least common ancestor $u$
and the edges to neighbor sets that make up $R$'s projection.

The procedure
can be simulated without actually rooting the tree at $y$; it suffices to
work ``inward'' from the marked leaves that are members of $R$, and mark
a node when its counter is equal to one less than its number of neighbors,
since one of its neighbors is implicitly the parent.  Since the entry in
the quotient matrix has a 1 for each member of $X$, it takes $O(|R|)$ time
to find the quotient representation of each row $R$.  The total time
to find all quotients is therefore linear.

\subsection{Computing the quotient-labeled PC tree in $O(n)$ time}

We now show that given a succinct representation of a circular-ones
matrix, we may obtain the PC tree in $O(n)$ time.
By {\em complementing} a row of a binary matrix, we mean changing every 1 in 
the row to a 0 and every 0 to a 1.  It is obvious that complementing
a row of a circular-ones matrix yields a circular-ones matrix~\cite{Gol80}.
The following is given in~\cite{HsuMcC03} for the version of the PC tree where
the nodes do not have quotient labels.  

\begin{lemma}
The PC tree of a circular-ones matrix is invariant under complementing a
row of the matrix.
\end{lemma}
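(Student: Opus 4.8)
The plan is to show that the family $\mathcal{N}(\mathcal{R})$ of subsets of the column set $V$ that do not strongly overlap any row is unchanged when a single row is complemented, and then invoke the fact (Theorem~\ref{thm:bipartitiveTree} together with the lemma of~\cite{HsuMcC03}) that the PC tree is the unique bipartitive tree representing $\mathcal{N}(\mathcal{R})$. Since the PC tree is a function of $\mathcal{N}(\mathcal{R})$ alone, it suffices to prove $\mathcal{N}(\mathcal{R}) = \mathcal{N}(\mathcal{R}')$, where $\mathcal{R}'$ is obtained from $\mathcal{R}$ by replacing one set $R$ with its complement $\overline{R} = V \setminus R$.

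The key observation is the symmetry of the ``strongly overlaps'' relation under complementation of either argument. Recall that $X$ and $Y$ strongly overlap iff $X \cap Y$, $X \setminus Y$, $Y \setminus X$, and $\overline{X \cup Y}$ are all nonempty. I would first record the elementary fact that, for any sets $X, Y \subseteq V$, $X$ strongly overlaps $Y$ if and only if $X$ strongly overlaps $\overline{Y}$. This is immediate from the identities $X \cap \overline{Y} = X \setminus Y$, $X \setminus \overline{Y} = X \cap Y$, $\overline{Y} \setminus X = \overline{X \cup Y}$, and $\overline{X \cup \overline{Y}} = Y \setminus X$: the four ``witness'' sets for the overlap of $X$ and $\overline{Y}$ are exactly the four witness sets for the overlap of $X$ and $Y$, merely permuted. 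Hence a set $X \in V$ strongly overlaps $R$ if and only if it strongly overlaps $\overline{R}$, while for every other row $R_i$ (which is identical in $\mathcal{R}$ and $\mathcal{R}'$) the condition ``$X$ strongly overlaps $R_i$'' is literally the same statement in both families. Therefore $X$ fails to strongly overlap every member of $\mathcal{R}$ precisely when it fails to strongly overlap every member of $\mathcal{R}'$, which gives $\mathcal{N}(\mathcal{R}) = \mathcal{N}(\mathcal{R}')$.

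Having established equality of the two bipartitive set families, the uniqueness clause of Theorem~\ref{thm:bipartitiveTree} forces the underlying bipartitive trees—i.e.\ the PC trees, with their P/C node labels—to coincide. So the unlabeled PC tree is invariant under complementing a row; this is exactly the statement cited from~\cite{HsuMcC03}. I do not expect a genuine obstacle here: the whole argument is the one-line symmetry of strong overlap under complementation, and the only care needed is to make sure the four witness sets are matched up correctly and to note that $R \ne \emptyset, V$ and $\overline{R} \ne \emptyset, V$ under the standing assumption that every row has at least one $0$ and one $1$, so that $\mathcal{R}'$ is still a legitimate family of proper nonempty subsets of $V$ (and in particular $\mathcal{N}(\mathcal{R}')$ is again a bipartitive set family by the lemma of~\cite{HsuMcC03}). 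The mild subtlety worth a sentence is that the statement concerns the \emph{quotient-labeled} PC tree only implicitly—at this point we are reproving the unlabeled version, and the effect of complementation on the quotient labels (complementing the corresponding quotient row) is presumably handled in the material that follows.
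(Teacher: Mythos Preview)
Your proof is correct and follows exactly the paper's approach: the paper's entire argument is the single sentence ``The lemma is obvious from the observation that ${\cal N}({\cal R})$ does not change when a row is complemented,'' and you have simply spelled out the elementary symmetry of strong overlap under complementation that makes this observation true. The additional remarks about the standing assumption and the quotient labels are accurate but not needed for the statement as written.
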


The lemma is obvious from the observation that ${\cal N}({\cal R})$ does not
change when a row is complemented.

Let $c$ be a column of a circular-ones matrix.  Complementing all of the
rows that have a 1 in column $c$ turns $c$ into a column of zeros.  
In the succinct representation of the matrix, it takes $O(1)$ time to
complement each such row: if $(f,l)$ represents the columns of
the first and last 1 in the block of 1's in the row,
then the $(l+1,f-1)$, modulo the number of columns, 
represents the complement of the row.
Since now no row contains $c$, a circular-ones ordering where $c$ is
the last row of the matrix is a {\em consecutive-ones} ordering on the columns
of the matrix, excluding $c$.

The {\em PQ tree} was developed by Booth and Lueker to represent all
consecutive-ones orderings of the columns of a consecutive-ones 
matrix~\cite{BL76}.  It is a certain rooting
of the PC tree.  (The PC tree was originally developed to provide
an easier way to compute the PQ tree.~\cite{ShihHsuPlanarity})

\begin{lemma}[\cite{HsuMcC03}]\label{lem:PCPQ}
Let $M$ be a circular-ones matrix and $c$ be a column.  Let $M'$ be
the result of complementing all rows of $M$ that have a 1 in column $c$,
and then removing column $c$.
Removing leaf $c$ from the PC tree of $M$ and then rooting it at the 
neighbor of $C$ gives the  PQ tree for $M'$.
\end{lemma}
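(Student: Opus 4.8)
The plan is to reduce everything to the unlabeled version of the statement, which is already available from~\cite{HsuMcC03}, and then argue that the quotient labels are carried along correctly. First I would recall that complementing the rows of $M$ that meet column $c$ does not change $\mathcal{N}(\mathcal{R})$ (by the lemma immediately preceding this one, together with the observation that a row and its complement strongly overlap exactly the same sets), so the PC tree of $M$ and the PC tree of the intermediate matrix $M''$ (before deleting $c$) coincide as bipartitive trees. Since in $M''$ the column $c$ is all zeros, $c$ is irrelevant to the circular-ones orderings and its leaf can be pruned; pruning a leaf whose single neighbor then has degree two means we suppress that neighbor, and the resulting tree is exactly the PC tree of $M'$. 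The unlabeled part of the statement then follows from the known fact (already cited in the excerpt) that the PQ tree of a consecutive-ones matrix is the rooting of its PC tree at the leaf corresponding to a column that lies outside every row — here, rooting at $c$, equivalently rooting the pruned tree at the neighbor of $c$.

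Next I would handle the quotient labels. The key point is that the projection of a row $R$ onto an internal node $u$ was defined (in Section~\ref{sect:PCQuotients}) purely in terms of which neighbor sets of $u$ are subsets of $R$ and which are disjoint from $R$ — and when $R$ is replaced by its complement $\overline{R}$, ``subset of'' and ``disjoint from'' simply swap roles. Hence the projection of $\overline{R}$ onto $u$ is the complement (within the neighbor multiset of $u$) of the projection of $R$ onto $u$, and the node $u$ to which the row projects is unchanged, since $u$ is the unique internal node all of whose neighbor sets are subsets of or disjoint from the row. Therefore each quotient matrix of $M''$ is obtained from the corresponding quotient matrix of $M$ by complementing exactly the rows coming from rows of $M$ that met $c$. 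When we then delete column $c$ and reroot at the neighbor of $c$, one quotient matrix — the one at the neighbor of $c$ — may lose the column corresponding to the (now pruned) edge toward $c$, but this is precisely the bookkeeping the PQ-tree representation of $M'$ prescribes at its root, so all quotients match.

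The main obstacle I expect is the third clause in the reduction, namely pinning down exactly how the rooted PQ tree of $M'$ stores the label at its root versus how the PC tree of $M''$ stores the quotient at the corresponding (possibly suppressed) internal node: there is a small case distinction depending on whether $c$'s neighbor in the PC tree has degree three (so it is suppressed when $c$ is pruned, and its quotient — which by Lemma~\ref{lem:PQuotients}, if it is a P node, consists of trivial rows — must be reconciled with the root datum of the PQ tree) or degree at least four (so it survives as the root of the PQ tree with its quotient intact, minus the $c$-column). I would dispatch this by observing that in $M''$ the column $c$ contributes a neighbor set $\{c\}$ of size one at that node, so deleting it just deletes one all-but-one-kind or one-of-a-kind coordinate from each quotient row there, which is exactly the adjustment made in passing from a circular-ones quotient to the consecutive-ones (rooted) quotient used by the PQ tree; the degree-three case then reduces to the already-handled suppression of a degree-two node, whose (empty or trivial) quotient carries no information. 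Everything else is the routine verification that the two trees, as labeled rooted trees, have the same nodes, the same parent–child structure, and the same quotient at every node.
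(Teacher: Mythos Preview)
The paper gives no proof of this lemma at all: it is simply quoted as a result from~\cite{HsuMcC03}. Your first paragraph is therefore already more than the paper provides, and its content---complementation leaves $\mathcal{N}(\mathcal{R})$ unchanged, column $c$ becomes all-zero and can be pruned, and the PQ tree is the appropriately rooted PC tree---is exactly the reasoning behind the cited result.

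The rest of your proposal, however, proves something the lemma does not assert. Lemma~\ref{lem:PCPQ} concerns only the \emph{unlabeled} PC and PQ trees; quotient labels are not part of its statement. Your second and third paragraphs are really attacking material from the \emph{subsequent} lemma (the $O(n)$ construction of the succinct quotient-labeled PC tree). There, your claim that ``the node $u$ to which the row projects is unchanged'' under complementation is not quite correct: as the paper itself notes in the proof of that next lemma, if after complementation the image of a row at $u$ collapses to a single non-leaf neighbor $w$, the definition of projection (least common ancestor of the 1's when rooted at a 0-leaf) forces the row to project onto $w$, not $u$, and the projection must be relocated. So for Lemma~\ref{lem:PCPQ} as stated, your first paragraph suffices and matches the paper; the remainder is surplus to requirements here, belongs with the following lemma, and contains this small oversight.
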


\begin{lemma}\label{lem:PCOn}
It takes $O(n)$ time to compute the PC tree of a circular-ones matrix,
given a succinct representation of it.
\end{lemma}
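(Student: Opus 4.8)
The plan is to reduce the construction of the PC tree of $M$ to the construction of a PQ tree of an associated consecutive-ones matrix $M'$, and to invoke the known linear-time PQ-tree algorithm of Booth and Lueker~\cite{BL76}. By Lemma~\ref{lem:PCPQ}, if $c$ is any fixed column of $M$, $M'$ is the matrix obtained by complementing every row of $M$ that has a $1$ in column $c$ and then deleting column $c$, and the PC tree of $M$ is recovered from the PQ tree of $M'$ by adding back a leaf $c$ adjacent to the root and then forgetting the rooting. So it suffices to (i) produce a succinct representation — indeed an explicit consecutive-ones ordering — of $M'$ in $O(n)$ time from the succinct representation of $M$, and (ii) observe that the PQ-tree algorithm runs in $O(n)$ time on this input.

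First I would fix an arbitrary column $c$ (say the column indexed $1$ in the given ordering, which is a circular-ones ordering of $M$). For each of the $n$ rows, the succinct representation stores the pair $(f,l)$ of first and last $1$-columns (or the special codes for all-$0$ and all-$1$ rows, which by the final remark of Section~\ref{sect:preliminaries} we may assume do not occur, or may simply carry along unchanged). In $O(1)$ per row we test whether column $c$ lies in the circular block $[f,l]$; if it does, we replace the row's descriptor by $(l+1, f-1)$ taken modulo the number of columns, as noted in the paragraph preceding Lemma~\ref{lem:PCPQ}. This yields, in $O(n)$ total time, a succinct representation of the matrix $M''$ in which column $c$ is all zeros, and the current circular-ones ordering of $M$ restricted to the columns other than $c$ is a genuine consecutive-ones ordering of $M' = M'' \setminus \{c\}$ (because no row wraps past the position of $c$ once $c$ is all zeros; placing $c$ last makes every block linearly consecutive). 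From the $(f,l)$ pairs we can list, for each column in this order, the set of rows whose block begins or ends there, so in $O(n)$ time we have $M'$ presented to the PQ-tree algorithm as a consecutive-ones ordering together with the per-row endpoint data — which is exactly an $O(n)$-size description of $M'$.

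Next I would run the Booth--Lueker PQ-tree construction~\cite{BL76} on $M'$. Since a consecutive-ones ordering is already in hand, the incremental restriction steps reduce to bookkeeping on intervals, and the algorithm runs in time linear in the number of rows plus the number of distinct intervals, i.e.\ $O(n)$; more directly, one may simply cite that from a consecutive-ones ordering the PQ tree is computable in linear time. Finally, apply Lemma~\ref{lem:PCPQ} in reverse: attach a new leaf labeled $c$ as a neighbor of the root of the PQ tree and discard the root marker (unroot), producing the PC tree of $M$. Re-attaching one leaf and unrooting is $O(1)$ on top of the tree size, and the PC tree has $O(n)$ nodes, so the whole procedure is $O(n)$.

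The main obstacle is the bookkeeping in step (i): one must be careful that after complementing the rows through $c$ and cutting $c$, the ordering genuinely becomes consecutive (not merely circular) for \emph{every} row, including rows that originally wrapped around the end of $M$ — this is where the choice ``$c$ last'' and the fact that no surviving row contains $c$ are used — and one must feed the PQ-tree routine an input description of size $O(n)$ rather than $O(\mathit{size}(M))$, which is why we pass the succinct $(f,l)$ data and the precomputed consecutive-ones ordering instead of the matrix entrywise. Everything else is a direct appeal to Lemma~\ref{lem:PCPQ} and to the linear-time PQ-tree algorithm.
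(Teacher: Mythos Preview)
Your overall strategy is exactly the paper's: complement through a fixed column $c$ in $O(1)$ per row to obtain a succinct consecutive-ones matrix $M'$, build its PQ tree, then invoke Lemma~\ref{lem:PCPQ} to recover the PC tree by attaching leaf $c$ at the root and unrooting. So the reduction is correct and matches the paper.

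The gap is in step (ii). You cite Booth--Lueker~\cite{BL76}, but that algorithm runs in time linear in $size(M')$, i.e.\ proportional to the number of rows plus columns plus $1$'s, not in $O(n)$. A succinct representation with $n$ rows can encode $\Theta(n \cdot k)$ ones, where $k$ is the number of columns, so this does not give the claimed bound. Your attempt to patch this (``since a consecutive-ones ordering is already in hand, the incremental restriction steps reduce to bookkeeping on intervals'') is not a proof: Booth--Lueker processes rows one at a time and its per-row work depends on the number of leaves touched, and it is not obvious how to short-circuit that to $O(1)$ per row just because you happen to know the endpoints. The paper closes this gap by citing a different algorithm, that of McConnell and de~Montgolfier~\cite{McCPQAlg}, which is specifically designed to compute the PQ tree in $O(n)$ time from a succinct representation of a consecutive-ones matrix. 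Replace your appeal to~\cite{BL76} with~\cite{McCPQAlg} and the argument is complete; without it, the $O(n)$ bound is unsupported.
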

\begin{proof}
Given a succinct representation of a circular-ones matrix $M$, we
may remove all rows that have a single 1 or a single 0, since these
have no effect on the circular-ones orderings of the matrix, hence on
the PC tree.
We may then select the last column $c$ of $M$, and identify the rows with
a 1 in column $c$ 
in $O(n)$ time.  We may complement each of them in $O(1)$ time, as
described above, for a total of $O(n)$ time.
We may remove $c$ from the resulting matrix by decrementing the
record of the number of columns.
This takes $O(n)$ time and gives a succinct representation of
a consecutive-ones matrix, $M'$.

In~\cite{McCPQAlg}, an $O(n)$ algorithm is given for finding the PQ tree
of a consecutive-ones matrix, given a succinct representation.
By Lemma~\ref{lem:PCPQ} and the fact that the removed rows, which have all
0's or all 1's, have no effect on the PQ tree of $M'$,
we may then add a new leaf corresponding
to $c$ to the root of the tree produced by this algorithm, and then unroot it 
to obtain the PC tree of $M$, also in $O(n)$ time.
\end{proof}

Note that since the quotient labels of a PC tree have circular-ones
orderings of columns, they can be expressed in the form
of succinct representations also.  Doing this for all quotient labels
causes them to take $O(n)$ space, since there is one quotient row for
each row of the original matrix.  This raises the question of whether
we can find this {\em succinct quotient-labeled PC tree} in $O(n)$
time.

\begin{lemma}
Given a succinct representation of a circular-ones matrix with $n$ rows,
it takes $O(n)$ time to find the succinct quotient-labeled PC tree.
\end{lemma}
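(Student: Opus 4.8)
The plan is to combine the $O(n)$ PC-tree construction of Lemma~\ref{lem:PCOn} with the observation that the projection procedure of Section~\ref{sect:PCQuotients}, which was analyzed for the $size(M)$ bound, can be replaced by a computation that reads each row only through its succinct $(f,l)$ endpoints. The key point is that when the PC tree comes equipped with a circular-ones leaf ordering (which it does, as part of the output of the construction), each row $R$ is a circularly consecutive block of leaves delimited by two leaves $c_f$ and $c_l$, and the projection $R'$ of $R$ on its least common ancestor $u$ is determined entirely by the two root-to-leaf paths to $c_f$ and $c_l$ together with the cyclic orders about the nodes on those paths. So instead of blackening $\Theta(|R|)$ leaves, I would trace just these two paths.

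First I would, as in Lemma~\ref{lem:PCOn}, delete all all-0 and all-1 rows and build the (unrooted) PC tree with a circular-ones leaf order in $O(n)$ time. I would then fix a root and precompute, in $O(\text{tree size}) = O(n)$ total time, standard navigation structure: depth, an $O(1)$ or near-linear-total least-common-ancestor structure on the tree, and, for each internal node, the cyclic order of its incident edges stored so that ``the arc of neighbors strictly between edge $e_1$ and edge $e_2$ going one way around'' can be named succinctly (e.g.\ by the index of $e_1$ and $e_2$ in the stored cyclic list). Then for each row $R$, given its endpoints $c_f, c_l$: compute $u = \mathrm{lca}(c_f, c_l)$; find the child edge $e_f$ of $u$ on the path to $c_f$ and the child edge $e_l$ on the path to $c_l$; and read off from the stored cyclic order of $u$ the consecutive block of neighbor edges of $u$ from $e_f$ to $e_l$ in the direction consistent with the circular-ones ordering. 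That block, recorded as a pair of indices into $u$'s cyclic list, is exactly the succinct form of the quotient row $R'$. (The single-1 row is the special case already noted in the text, handled directly.) Each row costs $O(1)$ after the lca query, so the total is $O(n)$, matching the time to build the tree. Finally, bucket the quotient rows by the node $u$ they land on so each quotient label is stored contiguously; this is an $O(n)$ counting-sort pass.

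The step I expect to be the main obstacle is verifying that the block of neighbor edges of $u$ delimited by $e_f$ and $e_l$ is read off \emph{in the correct direction} — i.e.\ that it really is the set of children whose leaf-descendant sets lie in $R$ rather than its complement. The earlier sections guarantee (via $X \in {\cal N}({\cal R})$, so $X$ does not strongly overlap $R$) that $R$ is a union of consecutive neighbor sets of $u$, and that the cyclic edge order about $u$ agrees with the circular-ones leaf order; but one must still argue that $c_f$ and $c_l$ — the first and last $1$ of the succinct encoding — fall in the first and last of these neighbor sets respectively, so that starting the block at $e_f$ and ending it at $e_l$ (as opposed to the complementary arc) is correct, and that the ``wrap-around'' case of the succinct encoding is handled consistently with how the root was chosen relative to $R$'s zeros. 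This is a careful but routine case analysis on the position of the root and the orientation of the cyclic order, using the fact established in Section~\ref{sect:PCQuotients} that rooting at any $0$-column of $R$ yields the same $u$ and the same projection. Once that correspondence is pinned down, the $O(n)$ bound is immediate and the succinct quotient label at each node is produced directly in the $(f,l)$-index form.
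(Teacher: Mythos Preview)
Your plan is essentially the paper's own approach: both reduce to computing, for each row given by its two endpoints, the projection node and the two delimiting neighbor edges via an LCA-type query, then record the quotient row succinctly. The paper in fact invokes Harel--Tarjan exactly as you suggest to locate $u$ and the two children containing $c_f$ and $c_l$.

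There is, however, a real issue in the step you call ``routine.'' You phrase the obstacle as choosing the correct \emph{direction} of the arc $[e_f,e_l]$ around $u$, but the more serious problem is that for a row $R$ whose $1$'s contain the fixed root, $\mathrm{lca}(c_f,c_l)$ need not be the projection node at all: it can be the wrong internal node, not merely the right node with the wrong arc. (Take a PC tree with two adjacent internal nodes $u_1,u_2$, leaves $1,2,3$ at $u_1$ and $4,5,6$ at $u_2$, cyclic leaf order $1,\ldots,6$, and $R=\{4,5,6,1\}$; with the tree rooted at $u_2$, $\mathrm{lca}(4,1)=u_2$, yet the projection node is $u_1$.) The paper avoids this entirely by first complementing every row that has a $1$ in a fixed column $c$ and deleting $c$, so that in the resulting PQ tree the root is a $0$ for \emph{every} row and the plain LCA of the two endpoints is always correct. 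Complementing back afterwards then forces a small special case: if the complemented image at $u$ is a single non-leaf neighbor $w$, the projection must be moved from $u$ to $w$ and re-expressed as ``all neighbors of $w$ except $u$.'' Your write-up does not anticipate either the wrong-node phenomenon or this relocation step. The fix is easy once seen --- either adopt the paper's complement-first trick, or replace your single LCA by the median of $c_f$, $c_l$, and any $0$-leaf such as $c_{l+1}$ (three LCA queries) --- but it is the substantive content of the argument, not a routine orientation check.
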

\begin{proof}
The algorithm proceeds as in the proof of Lemma~\ref{lem:PCOn} until 
the PQ tree of the succinct representation of the consecutive-ones
matrix $M'$ has been computed.  We install the succinct quotient labels in 
this PQ tree, as follows.  
We use Harel and Tarjan's least-common ancestor algorithm~\cite{HT84} to
find the least common ancestor of the endpoints of each interval,
and the child that contains the right endpoint of each interval.
Reversing the tree and repeating this gives the child containing the
left endpoint.   This takes $O(n)$ time in total.  These become the beginning 
and ending points for the representation of the row in the quotient at its
least common ancestor.

For every row that was complemented to obtain $M'$ from $M$, we
complement the image of the row
in the quotient at its least common ancestor, which takes 
$O(1)$ time for each of these rows.  Note that if the image is
all but one neighbor $w$ of a node $u$ and $w$ is not a leaf, then,
after the complementation, this projection consists only of $w$.
However, the definition of the projection at the beginning of Section 3
dictates that in this case the projection be onto the least common ancestor
of the row when the tree is rooted at a node that is 0 in the row.
Now $w$, not $u$, is this node, so the
the projection must be moved to $w$, and be changed to consist of all 
neighbors of $w$ other 
than $u$.  

For each row that was removed because it has only one 1 or one 0, let
$d$ be the leaf of the PC tree (column of the matrix) where the 1 or the
0 occurs.  The projection is on the neighbor $w$ of $d$, and consists
either of $d$ or of all neighbors of $w$ other than $d$, depending on
whether the row had a 1 or a 0 in $d$.
\end{proof}

\section{Testing isomorphism of circular-ones matrices}\label{sect:matrixIsomorphism}

In this section, we give an $O(size(M))$ time algorithm, first developed
in~\cite{Cur07}, to test isomorphism
of two circular-ones matrices if no circular-ones ordering of the two
matrices is given, and an $O(n)$ algorithm if a succinct representation
is given.  

It is possible to test whether the matrices are both circular-ones matrices
in linear time and to find a circular-ones ordering if they 
are~\cite{Tuck71}.
If both of them fail to be circular-ones matrices, the input is rejected
for failing to meet the precondition.  If exactly one of them fails to
be a circular-ones matrix, they are not isomorphic.  Otherwise,
from this ordering, we may find a succinct representation of the
two matrices in linear time.  We may then produce the quotient-labeled
PC trees with a succinct representation of the quotients in $O(n)$ time.

If the numbers of rows that are all 0's in the two matrices differ
or the number of rows of all ones differ, the matrices are not isomorphic.  
Otherwise, the problem reduces to the question of whether the two matrices are
isomorphic when these rows are eliminated.  Let $M_1$ and $M_2$ be
these two matrices.  This allows us to continue under our assumption that
every row of the matrices has at least one 0 and at least one 1.

Let us define an {\em isomorphism} $\pi$ from
one {\em quotient-labeled} PC tree, $T$, to another one, $T'$ to 
consist of the following.  It must be that whenever $u$ and $v$ are nodes 
of $T$, then $\pi(u)$ and $\pi(v)$ are adjacent in $T'$ if and only if 
$u$ and $v$ are adjacent in $T$.  This is the standard notion of graph 
isomorphism.  It must also satisfy an additional constraint.
Each neighbor $w$ of internal node $u$
of $T$ corresponds to a column of the quotient matrix at $u$.
Each neighbor $w'$ of node $u'$ of $T'$ is a column of the quotient
matrix at $u'$.  If $\pi$ is an isomorphism of the underlying trees,
then let $\pi_u$ denote the bijection it induces from neighbors of $u$
to neighbors of $u'$.  It must be  the case that at each internal
node $u$, $\pi_u$ is an isomorphism from the quotient matrix at $u$
to the quotient matrix at $u'$.  If such a $\pi$ exists,
we say that the trees are {\em isomorphic as 
quotient-labeled PC trees}; otherwise we say that they are not.
This definition precludes mapping a P node to a Q node, because
the quotient label determines whether a node is a P node or a Q node.

\begin{lemma}\label{lem:PCIsomorphism}
Two circular-ones matrices are isomorphic if and only if their quotient-labeled
PC trees are isomorphic.
\end{lemma}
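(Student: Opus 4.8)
The plan is to prove both directions of the equivalence by exploiting the fact, established in Section~\ref{sect:PCQuotients}, that the quotient-labeled PC tree of a circular-ones matrix $M$ determines $M$ up to isomorphism (the matrix can be reconstructed from the tree and its labels), together with the canonical nature of the bipartitive tree from Theorem~\ref{thm:bipartitiveTree}.

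\textbf{Easy direction ($\Rightarrow$).} First I would show that isomorphic matrices have isomorphic quotient-labeled PC trees. Suppose $\sigma$ is an isomorphism from $M_1$ to $M_2$, i.e.\ a bijection of columns (and the induced matching of rows) carrying $M_1$ to $M_2$. Since ${\cal N}({\cal R}_1)$ is defined purely in terms of the strong-overlap relation among the row sets and the column universe, $\sigma$ carries ${\cal N}({\cal R}_1)$ bijectively onto ${\cal N}({\cal R}_2)$ as set families. By the uniqueness clause of Theorem~\ref{thm:bipartitiveTree}, the bipartitive tree is determined by its set family, so $\sigma$ induces a tree isomorphism $\pi$ from the PC tree $T_1$ of $M_1$ to the PC tree $T_2$ of $M_2$ (in particular prime nodes map to prime nodes, so C nodes map to C nodes and P nodes to P nodes). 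It remains to check that $\pi$ respects the quotient labels: for an internal node $u$ of $T_1$, a neighbor $w$ of $u$ corresponds to a neighbor set $X_w \in {\cal N}({\cal R}_1)$, and $\sigma(X_w)$ is the neighbor set of $\pi(u)$ at $\pi_u(w)$. Tracing through the definition of the projection, the quotient row at $u$ arising from a row $R$ of $M_1$ is the indicator, over neighbors of $u$, of which neighbor sets lie inside $R$; applying $\sigma$ takes this to the indicator, over neighbors of $\pi(u)$, of which neighbor sets lie inside $\sigma(R)$, which is exactly the quotient row at $\pi(u)$ for $\sigma(R)$. Hence $\pi_u$ is a matrix isomorphism from the quotient at $u$ to the quotient at $\pi(u)$, so $\pi$ is an isomorphism of quotient-labeled PC trees.

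\textbf{Harder direction ($\Leftarrow$).} Conversely, suppose $\pi$ is an isomorphism of the quotient-labeled PC trees $T_1$ and $T_2$. The idea is to reconstruct the column bijection from $\pi$ and verify it is a matrix isomorphism. Each column of $M_i$ is a leaf of $T_i$, so $\pi$ restricted to leaves gives a column bijection $\sigma$. For the rows, the reconstruction procedure of Section~\ref{sect:PCQuotients} tells us that each row $R$ of $M_1$ is recovered from its projection $R'$ at some internal node $u$: $R$ is the union of those neighbor sets of $u$ indicated by the $1$'s of $R'$. Because $\pi$ is a tree isomorphism, it matches each neighbor set of $u$ with the corresponding neighbor set of $\pi(u)$ (the leaf-descendant sets correspond under $\sigma$); because $\pi_u$ is a matrix isomorphism of quotients, the multiset of quotient rows at $u$ matches the multiset at $\pi(u)$ via $\pi_u$. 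Reassembling, the multiset of rows of $M_1$ that project to $u$ maps under $\sigma$ exactly onto the multiset of rows of $M_2$ that project to $\pi(u)$. Taking the union over all internal nodes $u$ and using that every row projects to exactly one node, $\sigma$ carries the row multiset of $M_1$ onto that of $M_2$, i.e.\ $\sigma$ is a matrix isomorphism.

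\textbf{Main obstacle.} The delicate point, and the step I expect to require the most care, is the bookkeeping around the projection in the $\Leftarrow$ direction: one must be sure that ``$R$ is the union of the neighbor sets selected by its quotient row'' really does reconstruct $R$ faithfully, including the special cases noted in Section~\ref{sect:PCQuotients} (rows with a single $1$, which project onto a leaf's neighbor with projection $\{c\}$, and the complementation/``move to $w$'' subtlety), and that the assignment of a row to the node it projects onto is itself determined by the PC tree in an isomorphism-invariant way. Concretely, I would phrase a small reconstruction lemma: the map sending a quotient-labeled PC tree to the multiset $\{\bigcup_{w \in R'} X_w : u \text{ internal}, R' \text{ a row of the quotient at } u\}$ of subsets of the leaf set yields exactly ${\cal R}$, and is natural with respect to tree isomorphisms; the theorem then follows by applying this lemma to both $T_1$ and $T_2$ and transporting along $\pi$. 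Once the naturality of this reconstruction is in hand, both directions close immediately, and the running-time reduction to PC-tree isomorphism testing is handled separately in the remainder of the section.
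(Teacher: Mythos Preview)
Your proof is correct, and the backward direction is essentially the paper's own argument spelled out: the paper simply says ``arrange the PC tree of $M$ so that it is identical to that of $M'$; since it now represents $M'$ \ldots'', which is exactly your reconstruction-and-transport via the leaf bijection $\sigma$. The forward direction, however, takes a genuinely different route. The paper argues \emph{operationally}: since both the given ordering of $M$ and the ordering $\pi(M)=M'$ are circular-ones orderings of the same underlying matrix, $\pi$ is realizable as a sequence of the permitted PC-tree operations (reversals at C~nodes, arbitrary permutations at P~nodes) from the theorem of~\cite{HsuMcC03}, and each such operation is visibly a quotient-preserving automorphism. You instead argue \emph{structurally}: $\sigma$ carries ${\cal N}({\cal R}_1)$ onto ${\cal N}({\cal R}_2)$, so the uniqueness clause of Theorem~\ref{thm:bipartitiveTree} forces a tree isomorphism, after which you verify quotient compatibility directly from the definition of projection. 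Your route is more self-contained (it needs only bipartitive-tree uniqueness and the projection definition, not the permitted-operations characterization) and makes the naturality of the reconstruction explicit; the paper's route is shorter and ties the lemma directly to the data-structure view of the PC tree, at the cost of leaning on the~\cite{HsuMcC03} theorem. The ``main obstacle'' you flag --- that reassembly from quotient rows is natural under tree isomorphism --- is precisely what the paper is silently invoking when it says the tree ``represents'' the matrix.
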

\begin{proof}
Suppose matrices $M$ and $M'$ are circular-ones orderings of isomorphic 
circular-ones matrices.  Let $\pi$ be an isomorphism from $M$ to $M'$.
After permutation of columns of $M$ by $\pi$, $M$ and $M'$ have
identical quotient-labeled PC trees, since they are identical multisets of 
row vectors. 
Since $M$ and $M'$ are both circular-ones orderings of $M$,
$\pi$ is one of the permutations of columns of $M$ permitted
by the quotient-labeled PC tree of $M$.  In other words, the PC
tree of $M'$ can be obtained from the PC tree of $M$ by the
permitted operations.  These operations define an isomorphism from
the quotient-labeled PC tree of $M$ to the quotient-labeled PC tree of $M'$.

Conversely, suppose $M$ and $M'$ are circular-ones orderings of matrices that
have isomorphic quotient-labeled PC trees.  Use the isomorphism 
to arrange the PC tree of $M$ so that it is identical to that of $M'$.
Since it now represents $M'$ instead of $M$, it follows that the
permutation of leaves of the PC tree induced by the isomorphism
is an isomorphism from $M$ to $M'$.
\end{proof}

Lemma~\ref{lem:PCIsomorphism} is the basis of our algorithm:  for two 
circular-ones matrices, we compute their quotient-labeled PC trees and
test whether they are isomorphic.  To test whether they are isomorphic,
we encode the trees with strings in such a way that they both have
the same encoding if and only if they are isomorphic.

\subsection{Testing isomorphism of rooted, unordered trees}\label{sect:AHU74}

Our starting point is an algorithm for testing isomorphism of rooted,
unordered trees that is given in the textbook~\cite{AHU74}.
A rooted tree is {\em unordered} if the left-to-right order of children of 
its nodes is not specified.
Two unordered, rooted trees $T_1$ and $T_2$ are isomorphic if they are 
isomorphic as 
directed graphs when the edges are oriented from child to parent.

If the trees do not have the same height, they are not isomorphic.  Otherwise,
the algorithm proceeds by induction by level, from level 0, which is the level
of the deepest leaf, to level $h$, which is the level of the root and the 
height of the trees.  At each level, it labels each node $u$ with an integer
{\em isomorphism-class label} $e_u$ such that two nodes at the level have the
same label if and only if the subtrees rooted at them are isomorphic.
At step $i$, we may assume by induction that this has been done for nodes
at level $i-1$.   For all leaves at level $i$, the isomorphism-class label
is 0.  For the remaining nodes, we may apply the following procedure.
Let $t_u$ be the tuple of isomorphism-class labels for the children of
$u$, sorted in nondecreasing order.
We may then sort the non-leaf nodes at level $i$
by lexicographic order of the tuples assigned to them, in order to group
identical tuples together.  If there are $k$ distinct tuples, among the
tuples at level $i$,
we then assign isomorphism labels from 1 through $k$ to the tuples,
and then each node with the number of the tuple it was assigned.  
By induction, this meets the precondition for
induction step $i+1$.

Therefore, two trees are isomorphic if their roots receive the same 
isomorphism class label.  

\subsection{Canonical encodings of quotients}\label{sect:canon}

An additional requirement of an isomorphism $\pi$ on 
quotient-labeled PC trees
is that when a node $u$ of $T$ maps to a node $\pi(u)$ in $T'$,
then the neighbors of $u$ map to neighbors of $\pi(u)$ in a way that
is an isomorphism of the quotient at $u$ to the quotient at $\pi(u)$.

This requires that $u$ and $\pi(u)$ have isomorphic quotients.
In this subsection, we give an encoding of the quotient at a node of a
quotient-labeled PC tree so that two nodes receive the same encoding
if and only if their quotients are isomorphic.

\subsubsection{P nodes}

Consider the P node $a$ of Figure~\ref{fig:PCQuotients}.  
An obstacle to an immediate canonical representation is that the quotient 
can be presented in any column order, since all column orders are circular-ones
orderings.  

By
Lemma~\ref{lem:PQuotients}, every row of the quotient consists either
of one of $u$'s neighbors, or of all but one of its neighbors.  We can
encode the quotient by giving, for each neighbor, an ordered pair of integers.
The first integer is the number of rows
that exclude only that neighbor and the second integer is number of rows 
that contain only that neighbor.  

For example, at node $a$ in Figure~\ref{fig:PCQuotients},
the tuple generated by neighbor 1 is $(1,1)$, since the projection
of one row, row 5, excludes only neighbor 1, and the projection
of one row, row 1, contains only neighbor 1.
The tuple generated at neighbor 2 is $(0,0)$ since no row
contains only 2 and no row excludes only 2.  The tuple generated
at neighbor 3 is $(1,0)$, since the projection of one row, row 3, excludes only 
neighbor 3 and no rows contain only neighbor 3.  The tuple generated at 
neighbor $c$ is $(2,0)$, since the 
projections of rows 2 and 4 exclude only $c$ and no row contains only $c$.
Note that the tuple generated at a neighbor is invariant under permutation
of the cyclic order of neighbors about the P node.

Listing these tuples in the cyclic order of the nodes at which they
are generated yields an encoding of the quotient.  
The next step of the construction is to sort the generated tuples
lexicographically to obtain a tuple of tuples.
In the example of node $a$ of Figure~\ref{fig:PCQuotients},
this gives $((0,0),(1,0),(1,1),(2,0))$.  
This encoding is unique, since it is the lexicographic
minimum of all tuples whose elements are the generated tuples.

Finally, in order to prevent the possibility that a P node and a C node
will have the same encoding, we prepend a reserved P-node flag, 0, to
the list of tuples.  In the example if Figure~\ref{fig:PCQuotients}, this
gives $(0,(0,0),(1,0),(1,1),(2,0))$.

This therefore gives a test of isomorphism of quotients at two P nodes
by generating their canonical representations and testing whether they are
equal.

\subsubsection{A canonical encoding of the quotient at a C node}

In generating a canonical encoding of the quotient at a C node $u$, there are
only two cyclic orders of the columns that are circular-ones orderings.
One is the reverse of the other.  This leaves us with two obstacles
to a canonical representation:  which of these two cyclic orders should 
we choose,
and where in the cyclic ordering should we begin in developing a tuple to
represent the quotient?

We begin by traveling
counterclockwise around the cycle, starting at an arbitrary point.
At each neighbor $w$, we generate a tuple that lists the lengths of rows
whose clockwise-most 1 is at $w$, and list them in nondecreasing
order.

For example, consider the C-node $c$ depicted in Figure~\ref{fig:PCQuotients}.  
At neighbor $a$, we see that there are two projections, of
rows 11 and 10, whose
clockwise endpoint is at $a$, and they have lengths 2 and 3, respectively.
Therefore, the tuple generated for neighbor $a$ is $(2,3)$.
At neighbor 4, there is one projection, of row 12, that has its
clockwise endpoint at $4$, and it has length 3, so the tuple generated for 
4 is $(3)$.  Similarly, the tuples generated for neighbors 5, 6, and $b$
are $()$, $(2,3)$, and $()$, respectively.

Assembling these tuples in clockwise order, we get 
$((2,3),(3),(),(2,3),())$.
However, we must consider that
we made an arbitrary decision in choosing the point on the circle at which
to start generating the tuples.  The effect of different choices is to
rotate the resulting tuple of tuples.  To choose it in a canonical way, we
choose the rotation of the generated list of tuples that is earliest
lexicographically:  $((),(2,3),(),(2,3),(3))$.
We also made an arbitrary decision in going around the cycle clockwise
instead of counterclockwise.  Therefore, we
repeat the above procedure counterclockwise, generating 
$((2),(3),(3),(2),(3))$ if starting at $b$, and then choose the
rotation of this that is lexicographically minimum, $((2),(3),(2),(3),(3))$.
To choose the direction of
travel in a canonical way, we choose, from these two lists of tuples,
the one that is earlier in lexicographic order:  $((),(2,3),(),(2,3),(3))$.

To avoid any possibility that a C node and a P node can get the same encoding,
we prepend a reserved C-node flag, 1, to the list, yielding
$(1,(),(2),(2,3),(),(3))$.

The general algorithm is as follows. First, we order the neighbors of the node $u$ so that the
quotient has the circular-ones ordering.  For each neighbor $w$ in 
counterclockwise order, list the lengths of the rows of the quotient whose
clockwise-most 1 is at $w$, in nondecreasing order.  This gives a tuple
$(l_1,l_2, \ldots l_k)$ for $w$.
The sequence of tuples generated for each neighbor $w$, taken in
counterclockwise order, gives a tuple of tuples.
Rotate this ordering to get the lexicographic minimum through
rotation of the tuples.  Then repeat
the exercise reversing the roles of clockwise and counterclockwise to
obtain another such set of tuples.  From these, select the one that
is earlier lexicographically.  Then prepend the reserved C-node flag 1
to this list.

If $\pi$ is an isomorphism, then the tuples generated at a neighbor
$w$ of $u$ is the same as the one generated at neighbor $\pi(w)$ of $\pi(u)$.
Since the mapping of the cyclic ordering neighbors of $u$ to neighbors 
of $\pi(u)$ is the cyclic ordering of neighbors of $\pi(u)$ or its reverse,
the quotient at $u$ and $\pi(u)$ are encoded by the same tuple.
Conversely, if the quotient at C nodes $u$ and $w$ have the same tuple,
then since each tuple uniquely encodes cyclic rotations of the quotient,
$u$ and $w$ have isomorphic quotients.  Therefore, isomorphism of two
quotients can be tested by determining whether they are encoded
by the same quotient.

\subsection{Testing isomorphism of quotient-labeled PC trees}\label{sect:PCiso}

We combine elements of the rooted-tree isomorphism test of 
Section~\ref{sect:AHU74} with the test for isomorphism of quotients
of Section~\ref{sect:canon}, in order to obtain an isomorphism test
for quotient-labeled PC trees.

The use of elements of Section~\ref{sect:AHU74} requires us to root the
PC trees.  Conceptually, a rooting of a tree may be viewed as an 
orientation of its edges from child to parent, yielding a directed graph.
An isomorphism $\pi$ from
one rooted tree, $T$, to another, $T'$, is a bijection from nodes of $T$
to nodes of $T'$ such that for $u,v \in V(T)$, $(\pi(u),\pi(v))$ is a directed
edge in $T'$ if and only if $(u,v)$ is a directed edge in $T$.
Once we root two PC trees, we require them to satisfy this 
condition.  We must therefore be careful to root the two trees
in an isomorphic way whenever they are isomorphic.

The {\em center} of a one-vertex tree is its vertex and the center
of a one-edge tree is its edge.  Otherwise, the center is obtained
by deleting its leaves and recursively finding the center of the resulting
subtree.  It consists of a single vertex or a single edge.

If the center of a PC tree is node, we root it at that node.
If it is an edge $vw$, we subdivide the 
edge with a pseudo-node $x$ and root it at $x$ so that the tree has a node 
as a root, as we did above.  In the quotient at $v$, replace the name of 
$w$ with $x$ and in the quotient at $w$, replace the name of $v$ with $x$.  
Now $x$ can be treated as a P node with an empty quotient.

Once we have rooted the trees, we proceed by induction on the 
level $i$, as in the algorithm of Section~\ref{sect:AHU74}.  Because
we are applying a stronger notion of isomorphism, which observes
constraints imposed by the quotients, we must redefine what is
meant by the isomorphism-class label $e_u$ assigned to a node $u$ at
level $i$.

For every node $u$ in $T$ we define a tree $T_u$, an induced subtree of $T$, as
follows.
If $u$ is a leaf node in $T$, let $T_u$ be the one-node tree 
consisting of $u$.  If $u$ is an internal node that is not the root of
$T$, let $T_u$ be the
quotient-labeled subtree induced by
$u$, its descendants, and the parent $w$ of $u$. 
If $u$ is an internal node and the root of 
$T$, then $T_u = T$, and $u$ is the root of $T_u$.
Two nodes $u$ and $u'$ at level $i$ are in the {\em same isomorphism
class at level $i$} if and only if $T_u$ and $T_{u'}$ are 
isomorphic as quotient-labeled rooted PC trees.

Because of the
inclusion of the parent of $u$, the neighbors
of each internal node of $T_u$ are the same in $T_u$ as they are in $T$.
This allows each internal node to retain the same quotient in $T_u$
as it has in $T$.  This tree is rooted
at the parent of $u$.  

In order to merge the constraints of Sections~\ref{sect:AHU74}
and~\ref{sect:canon}, we prepend the isomorphism class label $e_w$
of a node $w$ at level $i-1$ (Section~\ref{sect:AHU74}) to the tuple
generated for the node in the encoding of the quotient 
(Section~\ref{sect:canon}).

The preconditions at the beginning of the induction step at level $i$ are 
the following.  Leaves at level $i$ are labeled with equivalence-class label 
0.  If $u$ is a non-leaf at level $i$, then the parent $p$ of $u$, if it 
exists, is a node of $T_u$ and labeled with equivalence class label 
-1.  Note that $p$ is a leaf of $T_u$, but no automorphism
$T$ to itself will map $p$ to any other leaf of $T_u$, since they are
at different levels of $T_u$.  Therefore, we must give $p$ a
different isomorphism class label from other leaves of $T_u$.

For any other neighbor $w$ of $u$, $w$ lies at level $i-1$, and, by induction,
it is labeled with an integer isomorphism class label for level $i-1$.
The isomorphism classes reflect the stronger constraints where, if $w$
and $w'$ are two nodes at level $i-1$ that have the same label,
$T_w$ and $T_{w'}$ are isomorphic as rooted quotient-labeled PC trees.
If, together in $T$ and $T'$, there are $k$ distinct isomorphism equivalence 
classes for internal nodes at level $i-1$, they are labeled with integers 
between 1 and $k$, where 1 denotes that a vertex is a leaf at level $i-1$.
For each neighbor of $u$, let $e_w$ denote the integer label from 
$\{-1, 0, 1, \ldots, k\}$ assigned to $w$.

We now strengthen the inductive step of Section~\ref{sect:AHU74} to make
the stronger induction hypothesis go through for level $i$.
The canonical encoding of the quotient at $u$ given in Section~\ref{sect:canon}
assigns a unique tuple
to each neighbor $w$ of $u$; to this tuple we simply prepend $e_w$
to the tuple generated for $w$.  The $e_w$ labels on neighbors
enforce the constraint
that $u$ and $u'$ can get the same tuple only if there is a bijection
$\pi$ from neighbors of $u$ to neighbors of $u$ such that $T_w$ and $T_{\pi(w)}$
are isomorphic as quotient-labeled
trees.  The rest of the tuple forces the constraint that they can get
the same tuple only if there exists such a $\pi$ that is also an
isomorphism from the quotient at $u$ to the quotient at $u'$, as
in Section~\ref{sect:canon}.  Conversely, after ordering the tuples
in the canonical way described in Section~\ref{sect:canon}, it is clearly
sufficient for $T_u$ and $T_{u'}$ to be isomorphic as quotient-labeled
trees for $u$ and $u'$ to be assigned the same tuple.

Replacing the tuples with integer codes from 1 to $k$, where $k$ is
is the number of distinct tuples at level $i$ completes the induction
step.  

Therefore, $T$ and $T'$ are isomorphic quotient-labeled PC trees if and only if,
after rooting them at their centers and performing this algorithm,
the roots get assigned the same integer label.

\subsection{Time bound}

\begin{theorem}\label{thm:circIsoTime}
Given the sparse representations of matrices $M$ and $M'$,
it takes $O(size(M))$ time 
either to determine that neither is a circular-ones matrix, or else to
determine whether they are isomorphic.  Given succinct representations
of two circular-ones matrices, this problem takes $O(n)$ time to solve.
\end{theorem}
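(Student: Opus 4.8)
The plan is to account for the running time of each stage of the algorithm described in the preceding subsections, and to argue that they sum to the claimed bounds. I would organize the argument around two cases, corresponding to the two input formats.

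\emph{Sparse representation.} First I would invoke the linear-time recognition algorithm for circular-ones matrices of~\cite{Tuck71} on each of $M$ and $M'$; this runs in $O(size(M))$ time and either rejects (when neither is circular-ones) or produces a circular-ones ordering, from which a succinct representation of each matrix is obtained in linear time. From here the bound follows from the succinct case plus the observation that $n \le size(M)$. I would also note that comparing the numbers of all-zero and all-one rows, and deleting them, costs $O(size(M))$.

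\emph{Succinct representation.} I would then walk through the pipeline. By Lemma~\ref{lem:PCOn} (and the succinct-quotient-label lemma following it), the succinct quotient-labeled PC trees of $M_1$ and $M_2$ are computed in $O(n)$ time; note the PC tree has $O(n)$ nodes since every internal node has degree at least three and there are at most $n$ quotient rows total, so the total size of all quotient matrices is $O(n)$. Next, finding the center and rooting each tree (subdividing a central edge if necessary, with the $O(1)$ local relabeling of two quotients) is $O(n)$. The main work is the level-by-level induction of Section~\ref{sect:PCiso}: at each level we generate, for each internal node $u$, the canonical quotient encoding of Section~\ref{sect:canon} with the $e_w$ labels prepended. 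For a P node with $k$ neighbors this is $O(k)$ plus the cost of lexicographically sorting $k$ small tuples; for a C node it is the cost of listing, for each neighbor, the lengths of the quotient rows ending there (total length proportional to the number of quotient rows at $u$ plus $k$), sorting these per-neighbor tuples, and choosing the lexicographically least rotation in each of the two directions. Then all the node-tuples at the level are radix-sorted together and replaced by integer codes $1,\ldots,k$. Summed over all nodes and levels, each neighbor relation and each quotient row is touched a constant number of times, so the total is $O(n)$ — provided the sorting and rotation-minimization steps are done carefully.

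\emph{Main obstacle.} The step I expect to require the most care is arguing that all the sorting is linear in aggregate. Ordinary comparison sorting of the tuples of tuples would introduce logarithmic factors; the standard remedy, which I would cite and apply, is the multi-level radix (bucket) sort used in~\cite{AHU74} for rooted-tree isomorphism, generalized to handle the variable-length tuples arising from the quotient encodings. One must check that the total length of all tuples produced across a level is linear in the number of edges and quotient entries at that level, that integers used as radix keys stay bounded by $O(n)$ (they are isomorphism-class codes in $\{-1,0,1,\ldots,k\}$ and small lengths), and that the rotation-minimizing step for C nodes can be done in time linear in the tuple length (e.g.\ via Booth's least-rotation algorithm or by treating each neighbor-tuple as a single symbol and applying a linear least-rotation routine). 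Granting these, the aggregate cost telescopes to $O(n)$, and composing with the recognition and conversion steps gives the $O(size(M))$ bound in the sparse case. Finally, correctness is already established: by Lemma~\ref{lem:PCIsomorphism} the matrices are isomorphic iff their quotient-labeled PC trees are, and the encoding of Section~\ref{sect:PCiso} assigns equal root labels iff the trees are isomorphic as quotient-labeled PC trees, so the time analysis above completes the theorem.
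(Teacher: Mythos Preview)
Your proposal is correct and follows essentially the same route as the paper: reduce the sparse case to the succinct case via recognition, build the succinct quotient-labeled PC trees in $O(n)$, root at the center, and run the level-by-level labeling with batched radix sorts and a linear least-rotation routine at C nodes. The paper organizes the aggregate-linear-sorting argument via four explicit propositions; in particular it bounds the maximum integer key \emph{per level} by the number of nodes at level $i-1$ plus the number of quotient rows at level $i$ (so bucket-array size telescopes to $O(n)$ across levels), whereas you state the global $O(n)$ bound and defer to the AHU technique --- the content is the same, but when you write it out you should make that per-level range bound explicit so that bucket initialization does not quietly cost $O(n)$ per level.
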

\begin{proof}
It takes $O(size(M))$ time to determine
whether they have the same number of 1's, by counting
1's in the two matrices in parallel.  If so, $size(M) = size(M')$.
If the standard sparse representations of the matrices is given,
it takes $O(size(M))$ time to determine whether they are circular-ones
matrices.  If neither is, the claim is satisfied.  If only one is,
they are not isomorphic.  Otherwise, it takes $O(size(M))$ time to convert them to 
the succinct representations.  From the succinct representations we can
compute the two quotient-labeled PC trees, as described above.

Therefore, by Lemma~\ref{lem:PCIsomorphism}, it suffices to show that the 
quotient-labeled PC-tree isomorphism algorithm 
can be implemented to run in $O(n)$ time.

\vspace{.2in}

{\bf Proposition 1:}  {\em Summing, for every level $i$, the number
of nodes at level $i-1$ plus the number of rows in quotients at level
$i$ gives a number that is $O(n)$.}  This just counts the number of
nodes in the tree plus the number of rows in the quotients.  Each row
of a matrix projects to just one row of a quotient.

{\bf Proposition 2:}  
{\em At level $i$, the sum of lengths of the tuples is at most proportional 
to the number of nodes at level $i-1$ plus the number of rows in quotients at 
level $i$.}  This is because one tuple is generated for each neighbor $w$
of $u$ contains an integer equivalence class label $e_w$,
and an encoding of a set of rows of the quotient at $u$.  The encoding of
each row in the quotient only appears in one of the tuples for neighbors.

{\bf Proposition 3:}
{\em At level $i$, the maximum integer in any tuple is bounded
by the number of nodes at level $i-1$ plus the number of rows
in quotients at level $i$.}  The integer equivalence classes at level
$i-1$ are assigned consecutive numbers, starting at 2, by sorting the
tuples lexicographically, and giving the same integer to two consecutive
tuples that are identical, and giving an integer that is one higher
than its predecessor's if it differs from its predecessor.
Each row of a quotient at level $i$ maps to only one element of 
a tuple generated at level $i$.

{\bf Proposition 4:}  A radix sort of a set of tuples of integers takes 
time proportional to the sum of lengths of the tuples plus the 
size of the range of integer values occurring in the tuples~\cite{CLRS01}.  

\vspace{.2in}

The tuples at P nodes must be sorted lexicographically.  
Number the nodes at level $i$ in any order to assign them
identification numbers, or {\em I.D. numbers}.  The maximum number label of one
of these nodes is at most the number of nodes at level $i-1$.
To each tuple for a child of a P node, prepend the I.D. number of
the parent.  Sort all tuples for P nodes at level $i-1$ in a single
lexicographic sort.  Since the I.D. number of the parent is the
major sort key, this groups all tuples of children of a P node
together, in lexicographic order, giving, for each P node,
one lexicographically sorted list of tuples for its children.
By Propositions 2, 3, and 4, the time for sorting all lists of tuples for
children of P nodes conforms to the measure given in Proposition 1.

The order of tuples of children of a Q node are already given by
the canonical procedure for generating them, as described above.

We must also sort the set of lists of tuples at level $i$ in order to generate
the equivalence class numbers for the nodes at level $i$.  A list
of tuples can be represented by a simple tuple of integers by appending a 
special separator, -2, to each tuple, and then concatenating them.
This change of representation does not affect the lexicographic order
of the lists, but turns them from lists of tuples to lists of integers
to make it easier to see that they can be radix sorted.  The addition
of the separators increases the range of values by $O(1)$.
By Propositions 2, 3, and 4, the time for sorting the set of lists of 
tuples for children of P nodes conforms to the measure given in Proposition 1.

Assigning integer equivalence-class labels to the lexicographically sorted
set of lists of tuples trivially takes time proportional to the sum of lengths
of lists of tuples, which, by Proposition 2, conforms to the measure of 
Proposition 1.

All of these steps conform to the measure of Proposition 1, so,
by Proposition 1,  they take $O(n)$
time over all iterations of the induction step.

We must also bound the time to choose, from the $2k$ possible choices
of a list of tuples at a C node, one that is earliest in lexicographic
order.  Generate two lists, one for each cyclic ordering, starting
at an arbitrary node for each.  Turn each of the lists from a list
of tuples to a list of integers, using the separators, as
described above.  Then apply the linear-time algorithm
of~\cite{Shi81} to find the cyclic rotation of each list that
is earliest in lexicographic order.  Of these two resulting lists,
choose the one that is earlier in lexicographic order.
\end{proof}

\section{Helly circular-arc graphs}\label{sect:Helly}

Every interval model has the Helly property~\cite{FulkGross65}. However,
unlike interval models, circular-arc models may fail to have the 
Helly property.
Figure~\ref{fig:nonHCA} gives a circular-arc model of a graph where  
the arcs that make up a clique, $\{A,B,C\}$, do not have a common intersection.

A graph is a Helly circular-arc graph if it admits at least one
circular-arc model that has the Helly property.
It is easily verified that there is no circular-arc model of the
graph of Figure~\ref{fig:nonHCA} where $A$, $B$, and $C$ have a common
intersection point.  This illustrates that
the Helly circular-arc graphs are a proper subset of the circular-arc graphs.

\begin{figure}
\centerline{\includegraphics[]{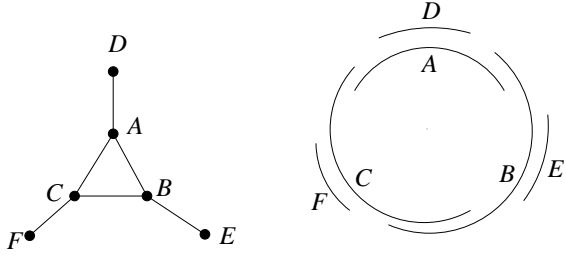}}
\caption{A non-Helly circular-arc graph.}\label{fig:nonHCA}
\end{figure}

Not every circular-arc model of a Helly circular-arc graph has the Helly
property.  Removing arcs $\{D,E,F\}$ from the model of Figure~\ref{fig:nonHCA}
yields a model for the complete graph on three vertices that does not have
the Helly property.  However, this graph is a Helly circular-arc graph;
it is easy to represent this graph with three copies of the same arc, which
has the Helly property.

The strategy of our
algorithm for finding whether two Helly circular-arc graphs are isomorphic is 
related to that of Lueker and Booth for finding whether two interval graphs are
isomorphic~\cite{LB79}; both algorithms use the 
clique matrices of the graphs.  The main new challenge 
involves correctly computing the clique matrix.

We consider adjacency-list representations
of two graphs $G$ and $G'$.  Recall that we assume that
the numbers of vertices and edges in both graphs are the same.
For each graph,
we obtain a Helly circular-arc model 
or determine that none exists, in $O(n+m)$ time,
using an existing algorithm for this problem~\cite{JLMSS11}.  If exactly
one of them is a Helly circular-arc graph, we determine that they are non-isomorphic.
If both of the input graphs fail to be Helly circular-arc
graphs, we reject them for failing to meet the precondition, even
though they may be isomorphic.

If both graphs are Helly circular-arc graphs, we
use the Helly circular-arc models to find succinct representations of 
circular-ones orderings of their clique matrices.  This involves some
complications not present in the corresponding problem on interval graphs,
which we show how to get around in $O(n)$ time, below.
Once we have succinct representations of the
clique matrices, we use the straightforward fact that two graphs are 
isomorphic if and only if their clique matrices are isomorphic
(Lemma~\ref{lem:cliqueIsomorphism}, below).
This reduces the
problem to isomorphism of circular-ones matrices, which we have solved 
in $O(n)$ time in Section~\ref{sect:PCiso}.  The total time is $O(n+m)$.

If the two graphs are circular-arc graphs and the inputs are circular-arc
models, we use the $O(n)$ algorithm of~\cite{JLMSS11} to find a Helly 
circular-arc model for each of them or determine that it is not a Helly 
circular-arc graph.   We then proceed as in the case where adjacency-list
representations of two graphs are given, but take a total of $O(n)$
time, rather than $O(n+m)$.

The main result of this section is the following theorem:

\begin{theorem}\label{thm:hcaIsomorphism}
Given the adjacency-list representations of two graphs, $G$ and $G'$, where
$G$ has $n$ vertices and $m$ edges, it takes $O(n+m)$
time either to determine that neither is a Helly circular-arc graph, or
else to determine whether they are isomorphic.  Given circular-arc models
of two circular-arc graphs, this problem takes $O(n)$ time to solve.
\end{theorem}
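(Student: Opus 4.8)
The plan is to reduce isomorphism of Helly circular-arc graphs to isomorphism of their clique matrices, which are circular-ones matrices by Gavril's theorem, and then apply Theorem~\ref{thm:circIsoTime}. The two ingredients to establish are: (i) two graphs are isomorphic if and only if their clique matrices are isomorphic as binary matrices (this is claimed as Lemma~\ref{lem:cliqueIsomorphism} below), and (ii) from a Helly circular-arc model we can compute, in $O(n)$ time, a succinct representation of a circular-ones ordering of the clique matrix. Granting these, the theorem follows by bookkeeping: build the Helly model in $O(n+m)$ time via~\cite{JLMSS11} (or $O(n)$ time if a circular-arc model is already supplied), handle the degenerate cases (exactly one graph Helly, or neither Helly) by either declaring non-isomorphism or rejecting, then invoke the succinct-input branch of Theorem~\ref{thm:circIsoTime} in $O(n)$ time.

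The order I would carry out the argument: first state and prove the clique-matrix isomorphism equivalence --- one direction is immediate since an isomorphism of graphs maps maximal cliques to maximal cliques bijectively and membership is preserved; the converse requires noting that a graph is determined by its clique matrix (two vertices are adjacent iff they lie in a common maximal clique, and every edge lies in some maximal clique), so a row-and-column permutation making the clique matrices equal yields a vertex bijection preserving adjacency. Second, describe the construction of the clique matrix from the Helly model: in a Helly circular-arc model the maximal cliques correspond to the inclusion-maximal sets of arcs sharing a common point, and these can be read off by sweeping the circle and tracking which arcs are "open"; the key point is that each clique column of the matrix is a circularly-consecutive set of vertices when the vertices are ordered by (say) counterclockwise endpoint, so the clique matrix inherits a circular-ones ordering directly from the model. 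Third, argue the $O(n)$ time bound for this extraction, and then assemble the pieces for the final time bound.

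I expect the main obstacle to be step (ii): correctly and efficiently computing the clique matrix from the Helly model. Unlike the interval case, a circular-arc model can have an arc that wraps around, and the "number of maximal cliques" needs care --- one must avoid double-counting when sweeping around the circle, handle the case where the whole vertex set is a clique (the graph is complete, degenerate for the circular-ones property), and make sure the succinct representation's wrap-around convention is respected. The paper flags precisely this: "This involves some complications not present in the corresponding problem on interval graphs, which we show how to get around in $O(n)$ time." So I anticipate the bulk of the work is a careful sweep algorithm over the endpoint list that identifies each maximal clique as a "local peak" in the count of currently-open arcs, records for each vertex the first and last clique it belongs to in the sweep order, and argues these blocks are consecutive modulo the number of cliques; the correctness hinges on the Helly property guaranteeing that pairwise-intersecting arcs share a point, so the maximal cliques are exactly the maximal point-wise intersections. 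Once that is in place the rest is routine reduction to Theorem~\ref{thm:circIsoTime}.
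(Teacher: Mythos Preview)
Your high-level reduction is exactly the paper's: reduce to clique-matrix isomorphism via Lemma~\ref{lem:cliqueIsomorphism}, use Gavril's characterization to know the clique matrix is circular-ones, and invoke Theorem~\ref{thm:circIsoTime} on succinct representations. You also correctly locate the only real work in step~(ii), extracting a succinct clique matrix from the Helly model in $O(n)$ time.

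The gap is in your proposed mechanism for step~(ii). You suggest identifying maximal cliques as ``local peaks in the count of currently-open arcs'' during a sweep. A local peak is precisely what the paper calls an \emph{intersection segment} (a counterclockwise endpoint immediately followed by a clockwise endpoint). In an interval model this works, but in a Helly circular-arc model it does not: the set of arcs through one intersection segment can be a \emph{proper subset} of the set through another intersection segment, and the \emph{same} maximal clique can occur at two widely separated intersection segments. The paper gives an explicit example (Figure~\ref{fig:HellyCliques}) where the arc sets at two intersection segments are strictly contained in the arc set at a third. So a plain sweep recording local peaks would produce spurious columns (non-maximal cliques) and duplicate columns, and the resulting matrix would not be the clique matrix. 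Your listed concerns (wrap-around bookkeeping, double-counting on one circuit, the all-one/complete case) are real but secondary; the essential difficulty is this domination phenomenon, which your proposal does not address.

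The paper's fix is a separate combinatorial step: place one candidate point per intersection segment, then compute a \emph{minimal $P$-dominating set} among these points, where $p$ dominates $q$ if the arcs through $q$ are a subset of those through $p$. This is done via two passes computing the ascending and descending \emph{semi-dominating sequences} $SD^+$ and $SD^-$ (Lemma~\ref{thm:SD}), implemented with a stack and precomputed ``farthest-extending arc'' pointers $A_{i+1}$, $B_{i+1}$, all in $O(n)$ time. Only after this filtering does the obvious numbering of surviving clique points yield the succinct circular-ones clique matrix. Your sweep idea is the right starting point, but without the domination-elimination step it would not produce the correct matrix.
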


If $n'$ and $m'$ are the number of vertices and edges of $G'$, it
takes $O(n+m)$ time to determine whether $n = n'$ and $m = m'$, by
counting these elements in the two graphs in parallel.
If this is not the case, then they are not isomorphic.  Therefore,
we may assume henceforth that $n = n'$ and $m = m'$.

\begin{lemma}[\cite{Chen96}] \label{lem:cliqueIsomorphism}
Two graphs are isomorphic if and only if their
clique matrices are isomorphic.
\end{lemma}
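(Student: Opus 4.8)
The plan is to prove both directions by exhibiting how a vertex bijection and a column (maximal-clique) bijection can be built from each other. For the forward direction, suppose $\pi\colon V(G)\to V(G')$ is an isomorphism. First I would observe that an isomorphism maps maximal cliques to maximal cliques: if $K$ is a maximal clique of $G$, then $\pi(K)$ is a clique of $G'$ (edges are preserved), and it is maximal because any vertex $v'$ extending $\pi(K)$ would pull back via $\pi^{-1}$ to a vertex extending $K$, contradicting maximality. Hence $\pi$ induces a bijection $\sigma$ from the maximal cliques of $G$ to those of $G'$, defined by $\sigma(K)=\pi(K)$. Then I would check directly that the pair $(\pi,\sigma)$, viewed as a permutation of rows and a permutation of columns, carries the clique matrix $C(G)$ to $C(G')$: vertex $v$ lies in maximal clique $K$ iff $\pi(v)$ lies in $\sigma(K)=\pi(K)$, which is exactly the statement that the $(v,K)$ entry of $C(G)$ equals the $(\pi(v),\sigma(K))$ entry of $C(G')$. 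So the clique matrices are isomorphic.

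For the converse, suppose the clique matrices $C(G)$ and $C(G')$ are isomorphic, witnessed by a row permutation $\pi$ and a column permutation $\sigma$. I claim $\pi$ (as a map $V(G)\to V(G')$) is a graph isomorphism. The key point is that adjacency is recoverable from the clique matrix: two distinct vertices $u,v$ of a graph are adjacent if and only if some maximal clique contains both of them (every edge lies in at least one maximal clique, and conversely any two vertices in a common clique are adjacent). In matrix terms, $u$ and $v$ are adjacent iff there is a column in which both row $u$ and row $v$ have a $1$. Since $\pi$ and $\sigma$ together make $C(G)$ identical to $C(G')$, rows $u$ and $v$ share a $1$-column in $C(G)$ iff rows $\pi(u)$ and $\pi(v)$ share a $1$-column in $C(G')$; hence $u\sim v$ in $G$ iff $\pi(u)\sim\pi(v)$ in $G'$, so $\pi$ is an isomorphism.

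The one subtlety worth flagging — the main obstacle, such as it is — is that "clique matrix" identifies a graph only up to the additional hypothesis that $G$ has no isolated vertices (an isolated vertex produces no $1$ and is indistinguishable from another isolated vertex, but also the whole matrix cannot detect the difference between a graph with $k$ isolated vertices and one with $k'\neq k$ of them if we are careless about how many all-zero rows we keep). In our setting this is harmless: the row count is part of the matrix, so an all-zero row is retained per isolated vertex, and matrix isomorphism preserves the multiset of rows, hence the number of isolated vertices; moreover two graphs agreeing on their non-isolated parts and on the count of isolated vertices are isomorphic. A single sentence handling the all-zero rows disposes of this. Everything else is the routine "adjacency $\Leftrightarrow$ co-membership in a maximal clique" observation applied in both directions, and the result is already available as~\cite{Chen96}, so the argument can be kept brief.
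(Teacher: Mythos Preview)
Your argument is correct and follows essentially the same route as the paper's proof: the forward direction uses that a graph isomorphism carries maximal cliques to maximal cliques, and the converse uses that two vertices are adjacent iff they lie together in some maximal clique. One small remark: your flagged ``subtlety'' about isolated vertices is not actually an issue, since an isolated vertex $v$ forms a maximal clique $\{v\}$ on its own and therefore has a $1$ in the corresponding column---no all-zero rows arise.
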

\begin{proof}
A graph isomorphism maps maximal cliques to maximal cliques,
so it defines an isomorphism of their clique matrices.
Conversely, since two vertices are adjacent if and only if they are
in a common clique, an isomorphism from the clique matrix of one
graph to that of another defines a graph isomorphism.
\end{proof}

In a circular-arc model of a graph, let an {\em intersection segment}
be a place where a counterclockwise endpoint of an arc is followed immediately by
a clockwise endpoint of an arc in the model in the clockwise direction;
the intersection segment is
the region between the two points.  In an interval model, 
a set of arcs corresponds to a maximal clique if and 
only if it is the set of intervals containing an intersection segment.
Each intersection segment can be located and marked by a 
{\em clique point} lying in the segment.
A clique is a maximal clique if and only if it is the set of arcs that 
contain a clique point.  A consecutive-ones ordering
of the clique matrix can be obtained by making one column for each such
maximal clique, and putting a 1 in the column in each row corresponding
to an arc that passes through the region.

By analogy, in a Helly circular-arc model, we can place a clique point
in each intersection segment.
Because the model has the Helly property, a maximal clique
must be the set of arcs containing one of the clique points.  However,
in contrast to the case of interval models,
not all such sets of arcs must be maximal cliques.  
Moreover, the same clique may appear multiple times, as
the arcs containing two different clique points in widely separated
parts of the circle.  Figure~\ref{fig:HellyCliques}
gives an example.

\begin{figure}
\centerline{\includegraphics[]{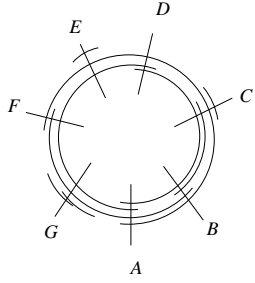}}
\caption{A Helly model where not every intersection segment corresponds
to a maximal clique.  The sets of arcs that contain the intersection segments
at points D and F are subsets of the one that contain the intersection
segment at A.}\label{fig:HellyCliques}
\end{figure}

To get a clique matrix for the graph, we must eliminate
redundant clique points and clique points that do not represent maximal
cliques from this Helly circular-arc model.   Suppose we accomplish this.
We can obtain a succinct representation of a circular-ones ordering
of its clique matrix as follows.  Number the arcs and label each arc's
two endpoints with its arc number.  Create an array of $n$ buckets, one
for each vertex.  Number the cliques by numbering the clique points
that have not been eliminated in order around the circle.  This is
a circular-ones ordering of the clique matrix.  

It suffices, for each bucket $i$, to
store the counterclockwise-most and clockwise-most clique number 
of each arc in order to get the succinct representation of the circular-ones
ordering of the clique matrix.  The counterclockwise points that correspond to 
clique $j$ are those in the maximal consecutive block of counterclockwise points
that lie immediately counterclockwise from clique point $j$.  
The clockwise points that correspond to
clique $j$ are those in the maximal consecutive block of clockwise 
endpoints immediately clockwise from clique point $j$.  This can be recorded 
in each bucket, giving the succinct representation of a circular-ones 
ordering of the clique matrix.

We now
describe how to eliminate redundant clique points and clique points
that do not correspond to maximal cliques.  Since we have placed one clique
point at each intersection segment, and each intersection segment contains
the counterclockwise point of an arc and a clockwise point of an arc, we have
placed at most $n$ clique points. A preliminary version of this procedure
appeared in \cite{LMSS08}.

For any point $p$ on the circle of a circular-arc model, denote by 
${\cal C}(p)$ the family of arcs of the model that contain $p$.  
Given two points $p_1$ and $p_2$ on the circle of a circular-arc model,
let us say $p_1$ {\em dominates} $p_2$ if ${\cal C}(p_2) \subseteq {\cal C}(p_1)$.
Let ${\cal A}$ be a circular-arc model and $P = \{p_1, \ldots, p_k\}$ be a 
set of points of the circle on which ${\cal A}$ resides,
where $(p_1, \ldots, p_k)$ is the clockwise order in which $P$ appears in
a traversal of the circle, starting at an arbitrary point on the circle.  
Let us say that $P' \subseteq P$ is a 
\emph{$P$-dominating} set if every point in $P \setminus P'$ is dominated by 
some point in $P'$.  Any minimal set of dominating points, with respect to containment, among the set $P$ of at most 
$n$ clique points we have placed on the circle, is a non-redundant set of
clique points.  We solve the following more
general problem in $O(n+|P|)$ time:

\begin{itemize}
\item Given a set $P$ of points on the circle of a (not necessarily Helly) 
circular-arc model, find a minimal $P$-dominating set.
\end{itemize}

If the model is Helly and $P$ consists of one point per intersection segment, 
a minimal P-dominating set gives the columns of the clique matrix.

The \emph{ascending semi-dominating sequence} of $P$ is the subset 
$SD^+(P) = \{p_i \in P \mid {\cal C}(p_i) \not \subseteq {\cal C}(p_j)$
for all $p_j \in P$ such that $1 \leq i<j\leq k \}.$  In 
other words, $SD^+(P)$ contains the points $p_i \in P$ that are not 
dominated by any later point in $P$.  Similarly, the 
\emph{descending semi-dominating sequence} of $P$ is the subset 
$SD^-(P) = \{p_j \in P \mid  {\cal C}(p_j) \not \subseteq {\cal C}(p_i)$
for all $p_i \in P$ such that $1 \leq i < j \leq k\}.$
The following lemma reduces the problem of finding a minimal $P$-dominating
sequence to that of finding $SD^+$ and $SD^-$.

\begin{lemma} \label{thm:SD}
Let ${\cal A}$ be a circular-arc model and $P$ be a set of points on it.  
Both $SD^-(SD^+(P))$ and $SD^+(SD^-(P))$ are minimal $P$-dominating sequences.
\end{lemma}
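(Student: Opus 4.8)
The plan is to show two things for, say, $Q = SD^-(SD^+(P))$: first, that $Q$ is $P$-dominating, and second, that $Q$ is minimal, i.e., no point of $Q$ is dominated by another point of $Q$. The symmetric argument handles $SD^+(SD^-(P))$. I would begin by recording two elementary observations about domination: it is transitive (if $p_1$ dominates $p_2$ and $p_2$ dominates $p_3$, then ${\cal C}(p_3) \subseteq {\cal C}(p_2) \subseteq {\cal C}(p_1)$), and $SD^+(P)$ is itself $P$-dominating. For the latter, take $p \in P \setminus SD^+(P)$; by definition of $SD^+$ there is a later point $p_j$ with ${\cal C}(p) \subseteq {\cal C}(p_j)$. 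If $p_j \in SD^+(P)$ we are done; otherwise repeat with $p_j$ in place of $p$, and since the index strictly increases each time this process terminates at a point of $SD^+(P)$ that dominates $p$ (by transitivity). The identical argument, run with indices decreasing, shows $SD^-(S)$ is $S$-dominating for any $S$.

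\textbf{Domination of $Q$.} Combining these, $SD^+(P)$ is $P$-dominating and $Q = SD^-(SD^+(P))$ is $SD^+(P)$-dominating, so by transitivity every point of $P$ is dominated by some point of $Q$; that is, $Q$ is $P$-dominating.

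\textbf{Minimality of $Q$.} Suppose for contradiction that $p, q \in Q$ with $p \ne q$ and $p$ dominates $q$, i.e., ${\cal C}(q) \subseteq {\cal C}(p)$. Since both lie in $SD^+(P)$, and $p \ne q$, one of them comes later than the other in the clockwise order; say $p$ has the larger index among the two within $P$. But then $q$ is a point of $P$ dominated by a strictly later point of $P$, which contradicts $q \in SD^+(P)$. So instead $q$ must have the larger index. Now both $p$ and $q$ lie in $SD^+(P)$; within the relabeled order of the subsequence $SD^+(P)$, $q$ still comes after $p$ (restricting to a subsequence preserves relative order). Then $q \in SD^-(SD^+(P)) = Q$ requires that ${\cal C}(q) \not\subseteq {\cal C}(p_i)$ for every earlier point $p_i$ of $SD^+(P)$; taking $p_i = p$ contradicts ${\cal C}(q) \subseteq {\cal C}(p)$. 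Either way we reach a contradiction, so no point of $Q$ dominates another, and $Q$ is a minimal $P$-dominating set.

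\textbf{Main obstacle.} The routine part is the transitivity bookkeeping; the step that needs care is the minimality argument, specifically keeping straight that "later in $P$" and "later in the subsequence $SD^+(P)$" agree, and handling the two index orderings of the hypothetical dominating pair so that one of the two semi-dominating conditions (the $SD^+$ one or the $SD^-$ one) is always violated. Once the order-preservation of subsequences is stated explicitly, both cases close immediately. The argument for $SD^+(SD^-(P))$ is obtained by reversing the clockwise traversal, which swaps the roles of $SD^+$ and $SD^-$.
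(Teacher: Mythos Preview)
Your proof is correct and follows essentially the same approach as the paper's: both arguments establish domination by showing each semi-dominating operation yields a dominating set (you iterate forward to a surviving dominator, the paper picks the extremal index, which amounts to the same thing) and then obtain minimality by a two-case split on the relative order of a hypothetical dominating pair, using the $SD^+$ condition for one order and the $SD^-$ condition for the other. The only cosmetic difference is that the paper writes out the case $SD^+(SD^-(P))$ while you write out $SD^-(SD^+(P))$, each appealing to reversal symmetry for the remaining case.
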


\begin{proof}
We only prove that $P^* = SD^+(SD^-(P))$ is a minimal $P$-dominating 
sequence.  The proof for $SD^-(SD^+(P))$ can be obtained by taking the 
reverse of ${\cal A}$.  Let $P = \{p_1, \ldots, p_k\}$ be points on 
the circle where $(p_1, \ldots, p_k)$ 
is the order in which $P$ appears in a clockwise traversal of the circle.  
We first prove that $P^*$ is in fact a $P$-dominating sequence.

By definition, every point $p_j \in P \setminus SD^-(P)$ is dominated 
by some point $p_i \in P$ for some $1 \leq i < j$.  If $i$ is the minimum 
element in $\{1, 2, \ldots, k\}$
such that $p_i$ dominates $p_j$, then no point $p \in \{p_1, \ldots, 
p_{i-1}\}$ can dominate $p_i$; otherwise $p$ would dominate $p_j$, 
contradicting the minimality of $i$.  Therefore, every point in 
$P \setminus SD^-(P)$ is dominated by some point in $SD^-(P)$.  We can 
apply a symmetric arguments for $SD^-(P)$ and $P^*$ to conclude that every 
point in $SD^-(P) \setminus P^*$ is dominated by some point in $P^*$.  
Since domination is a transitive relation, every point of $P$ is also 
dominated by some point in $P^*$, i.e., $P^*$ is a $P$-dominating sequence.

We now show that $P^*$ is minimal.  Observe that it is enough to show 
that if a point $p_i$ is dominated by a point $p_j \in P^*$, where
$p_j \neq p_i$, then $p_i \not\in P^*$.
This will imply that no point of $P^*$ dominates any other point of $P^*$, 
so no point of $P^*$ can be removed from it to yield a smaller dominating set.
If $j < i$ then $p_i \not \in SD^-(P)$, hence 
$p_i \not \in P^*$.  If $j > i$, then since  $p_j \in P^*$,
it follows that $p_j \in SD^-(P)$, and
we obtain again that $p_i \not \in SD^+(SD^-(P)) = P^*$.
\end{proof}

The algorithms for finding $SD^+$ and for finding $SD^-$ are symmetric. 
We describe the one to 
find $SD^+$.  The algorithm works by induction on $i$ to
find $SD^+(P_i)$, where $P_i = \{p_1, p_2, \ldots, p_i\}$.
That is, we find those points of $P_i$ that
are not dominated by any later points of $P_i$.  

By induction, assume we have the following partition of $SD^+(P_i)$ at the
end of step $i$:

\begin{itemize}
\item $D_i$:  points in $SD^+(P_i)$ that are already known to be in $SD^+(P)$.
\item $Q_i$:  remaining points in $SD^+(P_i)$; these are points that
are not dominated by any later point in $P_i$, but that may or may not be
dominated by points in $\{p_{i+1}, p_{i+2}, \ldots, p_k\}$.  
\end{itemize}

It is easy to see that it will follow that when
$i = k$, we get that $SD^+(P_k) =
SD(P) = D_k \cup Q_k$, and this solves the problem.

We begin with $SD^+(P_1) = \{p_1\}$, where $D_1 = \emptyset$ and $Q_1 = \{p_1\}$.

In step $i+1$, we obtain $Q_{i+1}$ and $D_{i+1}$ from $Q_i$ and $D_i$
as follows.  We remove points from $Q_i$
and insert them in $D_i$ if they pass a test that shows that they
cannot be dominated by any later point in $P$, including $p_{i+1}$.
The addition of these points to $D_i$ gives $D_{i+1}$.  We discard other 
points from $Q_i$ that are dominated by $p_{i+1}$.  We then add 
$p_{i+1}$ to $Q_i$.  This gives $Q_{i+1}$.

The test of whether a point $q$ moves from $Q_i$ to $D_{i+1}$ consists
of determining whether it is contained in the arc $B_{i+1}$
that does not contain
$p_k$, has its clockwise endpoint in $[p_i,p_{i+1})$, and
among all such arcs, extends farthest in the counterclockwise direction.
(See Figure~\ref{fig:francisco}.)
If $q$ is contained in $B_{i+1}$, then, since $B_{i+1}$ 
does not contain any point from $\{p_{i+1}, p_{i+2}, \ldots, p_k\}$, $q$
cannot be dominated by any point in this set, and since $q \in SD^+(P_i)$,
it it is not dominated by any later
point in $P_i$.  Therefore, it is a member of $SD^+(P)$, and can be
moved to $D_{i+1}$.

\begin{figure}
\centerline{\includegraphics[]{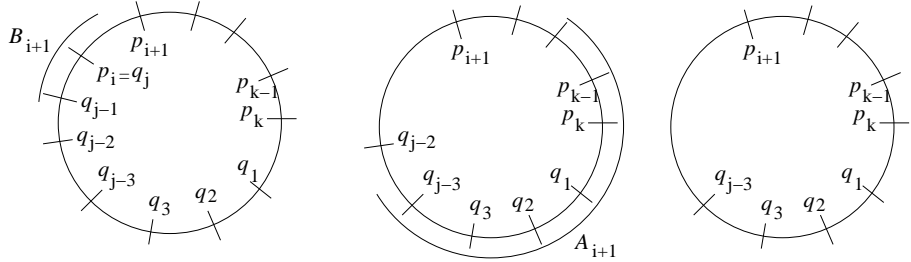}}
\caption{Computing $D_{i+1}$ and $Q_{i+1}$ from $D_i$ and $Q_i$.  Out of all
arcs that have their clockwise endpoints in $[p_i,p_{i+1})$ and
do not contain $p_k$,
$B_i$ is the one that extends farthest counterclockwise.
Elements of $Q_i$ that are contained in $B_{i+1}$ cannot be dominated
by any point in $\{p_{i+1}, p_{i+2}, \ldots, p_k\}$, hence they
are moved to $D_i$, yielding $D_{i+1}$.
Out of all arcs that contain $p_k$ but do not contain $p_{i+1}$, 
$A_{i+1}$ is the one that extends farthest clockwise.  Elements
still in $Q_i$ that are not contained in $A_{i+1}$ are dominated
by $p_{i+1}$, hence discarded from $Q_i$.  Then $p_{i+1}$ is added
to what remains of $Q_i$, yielding $Q_{i+1}$.}\label{fig:francisco}
\end{figure}

We implement $Q_i$ as a stack $(q_1, q_2, \ldots, q_j)$, where $q_j$
is the top of the stack.
Note that the set of points that get moved to $D_i$ are consecutive
at the top of the stack.  To move them, we pop the stack until
we reach an element not in $B_{i+1}$, and move the popped elements
to $D_i$.

A point $q'$ that is still in $Q_i$ fails to be dominated by $p_{i+1}$ 
if and only if it is contained in some arc 
that does not contain $p_{i+1}$.   All arcs in this set contain $p_k$,
since otherwise, $q'$ would already be identified as a member of $D_j$
for some $j \leq i+1$.

Of all arcs that exclude $p_{i+1}$ but contain $p_k$, let $A_{i+1}$ be 
the one whose clockwise endpoint extends farthest
clockwise from $p_k$.  (See Figure~\ref{fig:francisco}.)
Since $A_{i+1}$ is the arc in this set that covers the most 
members of $Q_i \setminus D_{i+1}$, it follows that the points
of $Q_i$ that are dominated by $p_{i+1}$ are those that are
not contained in $A_{i+1}$.

Note that the ones that are dominated by $p_{i+1}$ are again consecutive
at the top of the stack, so we pop the stack until we reach
an element that is contained in $A_{i+1}$, and discard the popped elements.

Since $p_{i+1}$ belongs in $Q_{i+1}$, we obtain our stack for $Q_{i+1}$
by pushing $p_{i+1}$ to what remains of the stack for $Q_i$.

For the time bound, note that we may find $B_{i+1}$ for each 
$i \in \{1, 2, \ldots k\}$ by traversing $[p_i,p_{i+1})$ comparing
the counterclockwise endpoints of arcs whose clockwise endpoints
are in this interval and do not contain $p_k$.  Over all $i$, this expends
$O(1)$ time on each arc, so it takes $O(n)$ time.  

To find $A_{i+1}$ for all $i \in \{1,2, \ldots k\}$, we start just
counterclockwise from $p_k$
and traverse the circle counterclockwise, keeping track of the {\em best}
arc so far.  The best arc is initially null.   When we reach
an arc $A$, we check whether $A$ contains $p_k$, and, if so,
whether it extends farther clockwise than the best arc so far.
If so, $A$ becomes the best arc so far.
Each time we reach a point $p_i$, we
record the best arc so far as $A_{i+1}$.  
Over all $i$, this also expends $O(1)$ time 
on each arc, for a total of $O(n)$ time.  

The management of the
stack implementing $Q_i$ takes $O(n)$ time over all steps, since
each point is pushed once, and when points are popped, they are
consecutive at the top of the stack.

\section{Isomorphism of $\Gamma$ circular-arc graphs, convex-round graphs, and proper circular-arc graphs} \label{sect:Gamma}

In this section we show that the circular-ones matrix isomorphism 
test of Section~\ref{sect:matrixIsomorphism}
can be used to test isomorphism of $\Gamma$ circular-arc graphs 
and convex-round graphs, using results of Chen~\cite{Che00}. From
this we get a new algorithm for isomorphism of proper circular-arc graphs.

\begin{theorem}
Given adjacency lists of two graphs, it takes $O(n+m)$ time to either determine that the graph are not
$\Gamma$ circular-arc graphs or to determine whether they are isomorphic.
\end{theorem}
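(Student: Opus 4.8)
The plan is to reduce isomorphism of $\Gamma$ circular-arc graphs to isomorphism of circular-ones matrices, which is solvable in $O(n+m)$ time by Theorem~\ref{thm:circIsoTime}, and to invoke the characterization that $G$ is a $\Gamma$ circular-arc graph if and only if its augmented adjacency matrix has the circular-ones property. The workflow I have in mind is: first, given the adjacency lists of $G$ and $G'$, build the augmented adjacency matrices $M$ and $M'$ by adding $1$'s on the diagonal; this takes $O(n+m)$ time and each matrix has size $O(n+m)$. Second, run the recognition step: test in linear time (e.g.\ using Tucker's algorithm, as cited for the matrix case) whether $M$ and $M'$ have the circular-ones property. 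If neither does, neither $G$ nor $G'$ is a $\Gamma$ circular-arc graph, and we correctly report this. If exactly one does, then exactly one of $G,G'$ is a $\Gamma$ circular-arc graph, so they are not isomorphic. Third, if both have the circular-ones property, apply the matrix isomorphism algorithm of Theorem~\ref{thm:circIsoTime} to decide whether $M \cong M'$, and report the answer.

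The correctness hinges on one clean observation: two graphs $G$ and $G'$ are isomorphic if and only if their augmented adjacency matrices are isomorphic as binary matrices. The forward direction is immediate --- a vertex bijection $\pi$ realizing $G\cong G'$ induces the same permutation on rows and columns of the augmented adjacency matrix and preserves all entries, including the diagonal $1$'s. For the converse, one must argue that any matrix isomorphism $(\tau,\pi)$ between augmented adjacency matrices is forced to satisfy $\tau=\pi$ (as permutations of the common index set, after identifying rows with columns). This is where a small argument is needed: the diagonal $1$'s and the symmetry of the matrix should pin down $\tau=\pi$. I would argue that since the matrix is symmetric with an all-$1$ diagonal, the set of $1$'s in row $i$ is exactly $N[i]$, and a clique-like / degree argument together with the fact that $i$ is always in its own closed neighborhood forces the row permutation and column permutation to coincide; once $\tau = \pi$, the relation ``$\pi(i)$ adjacent to $\pi(j)$ iff $i$ adjacent to $j$'' is exactly preserved by the off-diagonal entries, giving a graph isomorphism. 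This is the analogue, for augmented adjacency matrices, of Lemma~\ref{lem:cliqueIsomorphism} for clique matrices.

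The step I expect to be the main obstacle is precisely establishing $\tau = \pi$ for the augmented adjacency matrix --- making sure there is no pathological pair of graphs where a row permutation differing from the column permutation still yields an identical matrix. I would handle this by noting that the $i$-th row, viewed as a set, always contains column $i$ (the diagonal entry), and this is the only structural feature that links row indices to column indices; combined with the requirement that after applying $(\tau,\pi)$ the matrix is still symmetric with an all-$1$ diagonal, one can chase the constraints to conclude $\tau(i) = \pi(i)$ for all $i$. Once this is in hand, the time bound follows by bookkeeping: building the augmented matrices and testing the circular-ones property are $O(n+m)$, the matrices have size $O(n+m)$, and the matrix isomorphism test of Theorem~\ref{thm:circIsoTime} runs in $O(\mathit{size}(M)) = O(n+m)$, so the overall running time is $O(n+m)$.
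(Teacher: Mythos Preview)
Your algorithmic reduction is exactly the paper's: form the augmented adjacency matrices, test the circular-ones property in $O(n+m)$ time (this is the defining property of $\Gamma$ circular-arc graphs), and when both matrices pass, invoke Theorem~\ref{thm:circIsoTime}. The paper, however, does not attempt to prove the correctness lemma---that $G\cong G'$ if and only if their augmented adjacency matrices are isomorphic as binary matrices---and instead cites it as a result of Chen~\cite{Che00}.

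Your sketch of that lemma contains a genuine gap. The assertion that symmetry together with the all-ones diagonal forces $\tau(i)=\pi(i)$ is false: for $G=G'=K_2$ the augmented adjacency matrix is all ones, and $(\tau,\pi)=((1\ 2),\mathrm{id})$ is a matrix isomorphism with $\tau\neq\pi$. More seriously, even when all rows are distinct so that $\tau$ is uniquely determined by $\pi$, neither $\tau$ nor $\pi$ need be a graph isomorphism. Take $H$ to be the $4$-cycle on $\{1,2,3,4\}$ with edges $12,23,34,41$ and $G'$ the $4$-cycle on the same vertex set with edges $12,24,43,31$. Both are $\Gamma$ circular-arc graphs, and one checks that $H$ and $G'$ have the \emph{identical} multiset of closed neighborhoods, so $\pi=\mathrm{id}$ together with $\tau=(1\ 2)(3\ 4)$ is a matrix isomorphism; yet $H\neq G'$, so $\pi$ is not a graph isomorphism, and $\tau$ sends the edge $23$ of $H$ to the non-edge $14$ of $G'$, so $\tau$ is not one either. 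The actual graph isomorphism here is $(3\ 4)$, which is neither the row nor the column permutation. Hence ``chasing constraints'' to pin down $\tau=\pi$ cannot succeed, and the converse direction of the lemma needs a different argument. The lemma is true, but its proof is not the one you outline; in the paper it is simply imported from~\cite{Che00}. Your algorithm and its $O(n+m)$ bound stand once you either cite Chen or replace your argument for the converse with a correct one.
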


\begin{theorem}
Given adjacency lists of two graphs, it takes $O(n+m)$ time to either determine that the graph are not
convex-round graphs or to determine whether they are isomorphic.
\end{theorem}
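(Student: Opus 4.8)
The plan is to imitate the Helly case, replacing the clique matrix by the adjacency matrix. First I would record the two elementary facts that make this work. A graph $G$ is convex-round if and only if its adjacency matrix $M$ has the circular-ones property: the ``only if'' direction is the definition of Section~\ref{sect:preliminaries}, and the ``if'' direction holds because the augmented adjacency matrix of $\overline G$ is obtained from $M$ by complementing every row, and complementing rows preserves the circular-ones property, so $M$ circular-ones implies that $\overline G$ is $\Gamma$ circular-arc, i.e.\ that $G$ is convex-round. Second, the adjacency matrix $M$ of $G$, stored as a sparse matrix, has $size(M) = 2n+2m = O(n+m)$, so it can be built from the adjacency list in $O(n+m)$ time.

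With $M$ and $M'$ in hand, I would run the algorithm of Theorem~\ref{thm:circIsoTime} on them. In $O(size(M))=O(n+m)$ time this either reports that neither $M$ nor $M'$ is a circular-ones matrix --- in which case neither $G$ nor $G'$ is convex-round, and we reject the input for failing the precondition, exactly as with non-Helly inputs in Section~\ref{sect:Helly} --- or it decides whether $M$ and $M'$ are isomorphic as $0$--$1$ matrices. If exactly one of $M,M'$ is circular-ones, the two matrices are not isomorphic and one graph is convex-round while the other is not; the only remaining case is that both $G$ and $G'$ are convex-round.

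The crux is the claim that, \emph{for convex-round graphs}, $G\cong G'$ if and only if $M\cong M'$ as matrices. This is the analogue of Lemma~\ref{lem:cliqueIsomorphism}, but I expect its justification to be the real obstacle, because the trivial argument used there does not apply: the rows and columns of a clique matrix are indexed by disjoint sets (vertices versus maximal cliques), so a matrix isomorphism automatically respects that structure, whereas the rows and columns of an adjacency matrix are indexed by the same set, and there are graphs --- for example $C_6$ and $2C_3$, neither of which is convex-round --- whose adjacency matrices are matrix-isomorphic although the graphs are not. What saves the argument is that this cannot happen inside the class: I would invoke Chen's result~\cite{Che00}, which shows that the augmented adjacency matrix is an \emph{identification matrix} for $\Gamma$ circular-arc (concave-round) graphs, meaning that two such graphs are isomorphic exactly when their augmented adjacency matrices are matrix-isomorphic. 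Transferring this to convex-round graphs is routine: $G$ is convex-round iff $\overline G$ is $\Gamma$ circular-arc, $G\cong G'$ iff $\overline G\cong\overline{G'}$, Chen's theorem turns the latter into matrix isomorphism of the augmented adjacency matrices of $\overline G$ and $\overline{G'}$, and matrix isomorphism is preserved when all entries of both matrices are complemented, so this is equivalent to $M\cong M'$.

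Putting the pieces together: when both graphs are convex-round, Theorem~\ref{thm:circIsoTime} has already told us whether $M\cong M'$, and by the identification-matrix property this is the answer to whether $G\cong G'$. The running time is the $O(n+m)$ spent building the matrices plus the $O(n+m)$ spent inside Theorem~\ref{thm:circIsoTime}. I would close with a remark that if a circular-ones ordering of the adjacency matrix --- equivalently a succinct representation, which can be extracted from a circular-arc model of the complement --- is supplied, the matrix-isomorphism step drops to $O(n)$, matching the bound used for proper circular-arc graphs later in the section. To summarize, the one nontrivial ingredient is Chen's identification-matrix theorem, which I would cite rather than reprove.
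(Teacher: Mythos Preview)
Your proposal is correct and follows essentially the same route as the paper: build the adjacency matrix, invoke Theorem~\ref{thm:circIsoTime}, and justify correctness via Chen's identification-matrix result~\cite{Che00}. The only cosmetic difference is that the paper cites Chen directly for the statement ``two convex-round graphs are isomorphic iff their adjacency matrices are isomorphic,'' whereas you derive it from Chen's $\Gamma$ circular-arc version by passing to complements and complementing all entries; both are in~\cite{Che00}, so either citation path is fine. Your added discussion of why the identification-matrix property is nontrivial (the $C_6$ versus $2K_3$ example) is correct and a nice touch the paper omits.
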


A graph is 
a $\Gamma$ circular-arc graphs if its augmented adjacency matrix has the 
circular-ones property. Chen~\cite{Che00} showed that two $\Gamma$ circular-arc graphs
are isomorphic if and only if their augmented adjacency matrices are isomorphic.

Given adjacency-list representations of two graphs, it takes $O(n+m)$ time 
to determine whether their augmented adjacency matrices have the
circular-ones property~\cite{Gol80,HsuMcC03,Tuck71}.  If they both do, then it takes 
$O(n+m)$ time, using the isomorphism test of 
Section~\ref{sect:matrixIsomorphism} to 
determine whether they are isomorphic.  If a circular-one ordering of the adjacency matrices of the two
graphs are given using succinct representations, the test takes $O(n)$
time.

Convex-round graphs are complements of $\Gamma$ circular-arc 
graphs~\cite{BHY00}. The adjacency matrix of a convex-round graph has the 
circular-ones property.  Chen~\cite{Che00} showed that two convex-round graph
are isomorphic if and only if their adjacency matrices are isomorphic. Therefore,
we use the same technique to get the same bounds for testing isomorphism of
convex-round graphs as we do for testing isomorphism of $\Gamma$
circular-arc graphs.

Since every proper circular-arc graph is a $\Gamma$ circular-arc 
graph~\cite{Tuck71}, we can use the same algorithm
also for an isomorphism test of proper circular-arc graphs. This gives 
a new $O(n+m)$ isomorphism algorithm for proper circular-arc graph.

The $O(n+m)$ time bound
is optimal if the input graphs are given by adjacency lists, but it is not 
optimal if they are given by proper circular-arc models.
The algorithm of Lin et al.~\cite{LSS08} for the problem solves it in 
$O(n)$ time if the circular-arc models
are given.  We show how to achieve an algorithm with the same
time bound. We need to find succinct representations of circular-one
arrangements of the augmented adjacency matrix of the given proper circular-arc models.

Let ${\cal A}$ be a circular-arc model of $G$.
If ${\cal A}$ is not a proper circular-arc model, then we can convert
it in $O(n)$ time to such a model \cite{Nussbaum08}.
The model ${\cal A}$ can be changed 
in $O(n)$ time, such that no two arcs cover the circle together~\cite{KaplanNussbaumCert09, LS08}, and the model remains proper. 
After changing the model this way, we index the 
vertices of $G$ according to the clockwise order of their counterclockwise
endpoints, 
starting at an arbitrary endpoint. Since there are no arc containment or 
pair of arcs that cover the circle in ${\cal A}$, this indexing gives a 
circular-ones arrangement of the augmented adjacency matrix of $G$.

To find the last 1 entry in the row of each vertex of $G$ according to this
indexing, in $O(n)$ time, we go clockwise around the circle once, starting at an
arbitrary counterclockwise endpoint, which belongs to a vertex $v$.
If the next endpoint we encounter is a clockwise endpoint 
of a vertex $u$, then the last 1 in the row of $u$ is in the column of $v$. If the next 
endpoint we encounter is a counterclockwise endpoint of a vertex $u$, we set 
$v = u$. We end this traversal when we get back to the start point. The 
first 1 of every row of a vertex in $G$ is found symmetrically.

We conclude that it takes $O(n)$ time to find a succinct representation
of a circular-ones ordering of the augmented adjacency matrix of a graph
from a proper circular-arc model of the same graph.
Once we have succinct representation of the augmented adjacency matrices
of $G$ and $G'$ we can test them for isomorphism in $O(n)$ time.

\begin{theorem}
Given adjacency lists of two graphs, it takes $O(n+m)$ time to either determine that the graph are not
proper circular-arc graphs or to determine whether they are isomorphic. Given two circular-arc models of
circular-arc graphs, the same task takes $O(n)$ time.
\end{theorem}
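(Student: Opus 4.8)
The plan is to reduce both statements to the circular-ones matrix isomorphism test of Theorem~\ref{thm:circIsoTime}, exploiting that every proper circular-arc graph is a $\Gamma$ circular-arc graph~\cite{Tuck71} and Chen's theorem~\cite{Che00} that two $\Gamma$ circular-arc graphs are isomorphic exactly when their augmented adjacency matrices are. For adjacency-list input I would first run, for each of $G$ and $G'$, a linear-time recognition algorithm for proper circular-arc graphs; if neither is a proper circular-arc graph I report that the precondition fails, and if exactly one is I report non-isomorphism. In the remaining case both augmented adjacency matrices have the circular-ones property, so in $O(n+m)$ time I compute a circular-ones ordering of each and thereby a succinct representation. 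By Chen's theorem $G\cong G'$ iff these matrices are isomorphic, and Theorem~\ref{thm:circIsoTime} decides this in $O(size(M))=O(n+m)$ time, which gives the first bound.

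For the second statement the inputs are circular-arc models of two circular-arc graphs and the target is $O(n)$, so I must avoid constructing the $\Theta(n+m)$-size adjacency structure. First I would apply the $O(n)$ procedure of~\cite{Nussbaum08} to each model, which either rewrites it as a proper circular-arc model or certifies that the underlying graph is not a proper circular-arc graph; the precondition-failure and non-isomorphism cases are handled as before. Given two proper models I would then normalize each in $O(n)$ time so that no two arcs jointly cover the circle~\cite{KaplanNussbaumCert09, LS08}, which keeps the models proper. Indexing the vertices by the clockwise order of their counterclockwise endpoints then yields a circular-ones ordering of the augmented adjacency matrix, since the absence of arc containments and of circle-covering pairs forces each closed neighbourhood to be an interval modulo $n$. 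Two $O(n)$ sweeps around the circle recover the column of the first and of the last $1$ of every row under this ordering directly from the endpoint list, producing a succinct representation without ever enumerating the $1$'s. Handing the two succinct representations to Theorem~\ref{thm:circIsoTime} then decides isomorphism in $O(n)$ time.

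I expect the delicate part to be the model case rather than the adjacency-list case: one has to argue precisely that after both normalization steps the endpoint ordering is genuinely a circular-ones ordering of the \emph{augmented} matrix --- the circle-covering normalization is exactly what eliminates the interleaving patterns that would otherwise destroy consecutiveness --- and that the two circular sweeps correctly pick out the first and last $1$ of each row from arc endpoints alone, so that the entire reduction runs in $O(n)$ and never inspects the edge set. The adjacency-list case is comparatively routine once the recognition step and Chen's characterization are in hand.
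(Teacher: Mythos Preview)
Your proposal is correct and follows essentially the same route as the paper: reduce to isomorphism of the augmented adjacency matrices via Chen's theorem and Tucker's containment of proper in $\Gamma$ circular-arc graphs, and in the model case apply the Nussbaum conversion, the no-two-arcs-cover-the-circle normalization, endpoint indexing, and two circular sweeps to obtain a succinct representation before invoking Theorem~\ref{thm:circIsoTime}. If anything, you are slightly more explicit than the paper about running a proper circular-arc recognition step in the adjacency-list case, which is indeed needed to justify the ``not proper circular-arc'' branch of the theorem statement.
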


\section{Hsu's algorithm for circular-arc graphs isomorphism} \label{sect:hsu}

In this section we give an example of two circular-arc graphs that are not isomorphic, 
but the algorithm of Hsu \cite{Hsu95} determines that they are. We begin with 
few definitions from this paper. Let $G$ be a circular-arc graph, without 
universal vertices, and without any pair of vertices $v$ and $u$ such that $N[v]=N[u]$. 
A \emph{normalized model} of $G$ is a circular-arc model of the graph, such that for 
every two arcs $v$ and $u$: (1) if $N[u] \subseteq N[v]$ then the arc of $v$ contains 
the arc of $u$; (2) if every $w \in V \setminus N[v]$ satisfies $N[w] \subseteq N[u]$ 
and every $w' \in V \setminus N[u]$ satisfies $N[w'] \subseteq N[v]$ then the arcs 
of $u$ and $v$ cover the circle together. Every circular graph without universal arcs 
and without any pair of vertices with the same neighborhood has a normalized 
model \cite{Hsu95}.

Let $\cal A$ be a normalized model of $G$. We get the \emph{associated chord model} 
of $\cal A$ by replacing the arcs of $\cal A$ with chords. Let $\cal A'$ be an associated 
chord model of $G$. The chord model $\cal A'$ represents a \emph{circle graph} $G_C$, whose 
vertex set is the same vertex set as of $G$, and two vertices are adjacent 
if and only if their chords in $\cal A'$ intersect. Although there might be more 
than one unique associated chord model for $G$, the graph $G_C$ is unique. 
Hsu \cite{Hsu95} defined a type of chord model called \emph{conformal model} 
for $G_C$. The chord model $\cal A'$ is a \emph{conformal model} of $G_C$. Note 
that we do not repeat the definition of conformal model here, we just give 
an example for one such model. We do not require the definition for our purposes.

The origin of the mistake in Hsu's algorithm is the statement \emph{``To 
test the isomorphism between two such circular-arc graphs $G$ and $G'$, 
it suffices to test whether there exist isomorphic conformal models for 
$G_C$ and $G'_C$''} \cite[Section 9]{Hsu95}, where ``such circular-arc 
graphs'' refer to circular-arc graphs for which both $G_C$ and its 
complement are connected.

Consider the graphs $G$ and $G'$ in Figure \ref{fig:hsugraphs}. It is easy to 
see that $G$ and $G'$ are not isomorphic, since they have different number 
of edges. Normalized circular-arc models of both graphs are given in 
Figure \ref{fig:hsumodels}. From the circular-arc models, we can see 
that the chord model in Figure \ref{fig:hsuchords} is an associated 
chord model for both graphs, and hence a conformal model of both 
$G_C$ and $G'_C$. The graphs $G_C = G'_C$ are connected and so are 
their complements. We conclude that the statement above is wrong, and 
the algorithm of \cite{Hsu95} falsely finds that $G$ and $G'$ are isomorphic.

\begin{figure}
	\center
    \includegraphics[scale=0.5]{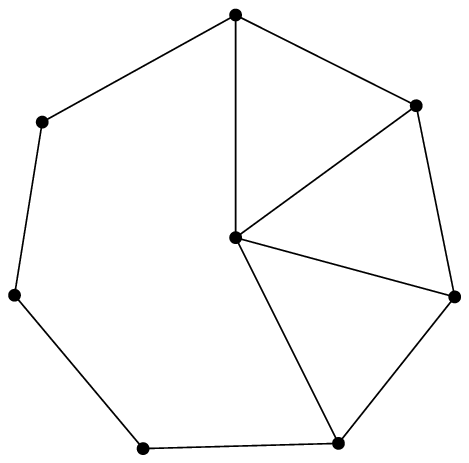}
    \hspace{2cm}
    \includegraphics[scale=0.5]{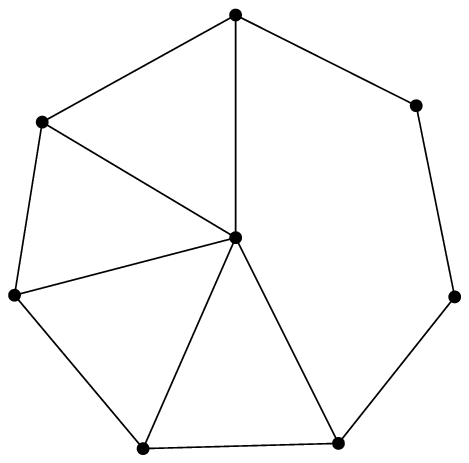}
    \caption{Two circular-arc graphs $G$ and $G'$. It is easy to see that the two graphs are not isomorphic to each other, since they have different number of edges.}
    \label{fig:hsugraphs}
\end{figure}

\begin{figure}
	\center
    \includegraphics[scale=0.5]{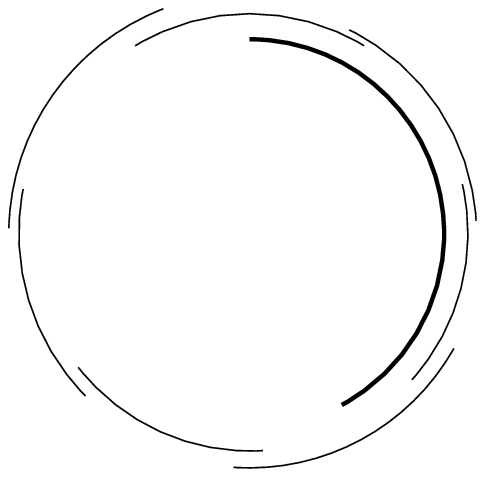}
    \hspace{2cm}
    \includegraphics[scale=0.5]{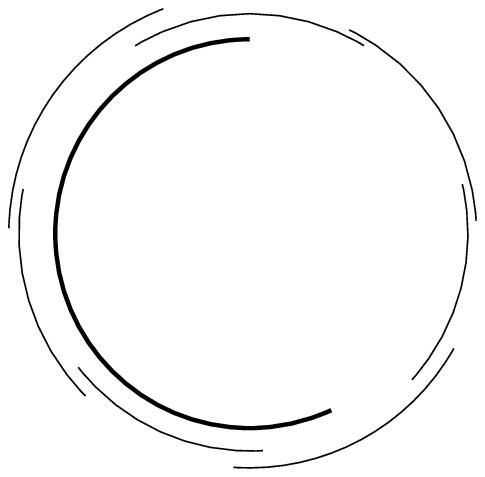}
    \caption{Normalized circular-arc models of $G$ and $G'$. The two models share seven arcs in common, and the eighth arc (\emph{bold}) is flipped between the two models.}
    \label{fig:hsumodels}
\end{figure}

\begin{figure}
	\center
    \includegraphics[scale=0.5]{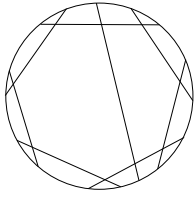}
    \caption{The associated chord model of both $G$ and $G'$.}
    \label{fig:hsuchords}
\end{figure}

Hsu \cite{Hsu08} noted that the isomorphism of conformal models of 
$G_C$ and $G'_C$, which the algorithm of \cite{Hsu95} produces, 
does give a mapping between vertices of $G$ and $G'$, if $G_C$ is 
inseparable with respect to modular decomposition. However, we do 
not know how to handle the case when this condition is not satisfied.
We note that the algorithm of \cite{Hsu95} works correctly for isomorphism
of circle graphs. With the $O(n^2)$ time recognition algorithm for circle graphs \cite{circrec},
the isomorphism test takes $O(n^2)$ time
if the graphs are given as adjacency matrices. The recent $O((n + m)\alpha(n + m))$
circle-graph recognition algorithm \cite{GPTC11} leads to the same running time for
circle-graph isomorphism, where $\alpha(\cdot)$ is the inverse Ackermann function.
If chord models are given as an input, then the running time of the
isomorphism test can be reduced to $O(n+m)$ using techniques similar to those used
in \cite{LSS08} and in our paper.


\end{document}